\documentclass[submission,copyright,creativecommons]{eptcs}
\providecommand{\event}{AFL 2026} 

\usepackage{amsthm}
\usepackage{ifthen}
\usepackage{latexsym,amssymb,amsmath,mathdots}
\usepackage{graphics}
\usepackage[dvipsnames]{xcolor}
\usepackage{graphicx}
\usepackage{tcolorbox}
\usepackage{bm}
\usepackage{shuffle}   %
\usepackage{xspace}

\usepackage{tikz}     
\usepackage{tkz-graph}  
\usetikzlibrary{%
  arrows,%
  positioning,%
  decorations.pathmorphing,%
  decorations.pathreplacing,%
  automata,%
  shapes.geometric,%
}

\usepackage{iftex}

\newtheorem{theorem}{Theorem}[section]

\newtheorem{definition}[theorem]{Definition}
\newtheorem{lemma}[theorem]{Lemma}

\theoremstyle{definition}
\newtheorem{remark}[theorem]{Remark}

\newtheorem{example}[theorem]{Example}

\newcommand\pnsi{\par\indent}
\newcommand\pnsn{\par\noindent}
\newcommand\pssi{\par\smallskip\indent}
\newcommand\pssn{\par\smallskip\noindent}

\newcommand\pmsn{\par\medskip\noindent}

\newcommand\emdef[1]{\textsf{#1}}

\newcommand\gray[1]{{\color{gray} #1}}

\newlength{\algoindent}   
\renewcommand{\phi}{\varphi}

\newenvironment{fsm}[1][]
	{\begin{center}
	\begin{tikzpicture}[font=\footnotesize,shorten >=1pt,node distance=3.5cm,
	on grid,>=stealth',initial text=,
	every node/.style={align=center},
	every state/.append style={inner sep=1pt},
	every path/.style={->,bend angle=20},
	#1]}
	{\end{tikzpicture}\end{center}}

\newcommand{\N}{\ensuremath{\mathbb{N}}\xspace}

\newcommand{\al}[0]{\ensuremath{\Sigma}\xspace} 
\newcommand{\alstar}[0]{\ensuremath{\al^*}\xspace} 
\newcommand{\alplus}[0]{\ensuremath{\al^+}\xspace} 
\newcommand{\alG}[0]{\ensuremath{\Gamma}\xspace}
\newcommand{\ew}[0]{\ensuremath{\lambda}\xspace}   
\newcommand{\ews}[0]{\ensuremath{\bm{\lambda}}\xspace}   
 
\newcommand\lang{{\cal L}}

\newcommand\rel{\ensuremath{\omega}\xspace}   
\newcommand\po{\ensuremath{\le}\xspace}   
\newcommand\ppx{\ensuremath{\po_{\mathrm{px}}}\xspace}   
\newcommand\psx{\ensuremath{\po_{\mathrm{sx}}}\xspace}   %

\newcommand\spo{\ensuremath{<}\xspace}   
\newcommand\ssx{\ensuremath{\spo_{\mathrm{sx}}}\xspace}   %

\newcommand\aut{\ensuremath{\bm{a}}\xspace}   
\newcommand\auta{\ensuremath{\bm{a}}\xspace}   
\newcommand\autb{\ensuremath{\bm{b}}\xspace}   
\newcommand\autc{\ensuremath{\bm{c}}\xspace}   
\newcommand\autd{\ensuremath{\bm{d}}\xspace}   
\newcommand{\trt}{\ensuremath{\bm{t}}\xspace}    
\newcommand\trs{\ensuremath{\bm{s}}\xspace}      

\newcommand{\tree}{\ensuremath{\hat\tau}\xspace}    
\newcommand{\treef}{\ensuremath{\hat\phi}\xspace}    
\newcommand{\etree}{\ensuremath{\bm\lambda_{\rm t}}\xspace}    

\newcommand{\valf}{\ensuremath{\dot\phi}\xspace}    
\newcommand{\valt}{\ensuremath{\dot\tau}\xspace}    

\newcommand\indepfont{\mathbb}

\newcommand\descrf{\ensuremath{\varphi}\xspace}  
\newcommand\indepf{\ensuremath{\indepfont{P}_{\descrf}}\xspace}
\newcommand{\indep}[1]{\ensuremath{\indepfont{P}_{#1}}\xspace}

\newcommand\indepP{\ensuremath{\indepfont{P}}\xspace}

\title{Independent Languages and Witnesses \\ of Dependence (Non-satisfaction)\footnote{This work was funded by a Discovery Grant of NSERC Canada (grant number RGPIN-2020-05996).}}
\author{Stavros Konstantinidis
\institute{Mathematics and Computing Science,\\
Saint Mary's University, Halifax, NS, Canada}
\email{s.konstantinidis@smu.ca}
}
\def\titlerunning{Witnesses of Non-satisfaction}
\def\authorrunning{S. Konstantinidis}

\begin{document}
\maketitle

\begin{abstract}
We consider the  independent language property defined by the language equation $\phi(X)=\emptyset$, where the expression $\phi$ involves the language variable $X$, constant languages, and standard regular operations including transductions, but not complementation.
This property consists of all languages $L$ satisfying $\phi(L)=\emptyset$. 
Depending on the choice of the operations in $\phi$, the equation defines a broad class of codes, including combinations of standard variable-length codes and error-detecting codes. 
We show that any $\phi$-independence is a J\"urgensen independence, we define what a witness of non-satisfaction of $\phi(X)=\emptyset$ is, for a language $L$, and we show how to compute a witness of non-satisfaction when $L$ is regular. We also discuss the complexity of the problem, showing that the decision version of the problem is PSPACE-complete.
\end{abstract}

\section{Introduction}\label{sec:intro}
A language $L$ over some alphabet $\al$ is called a comma-free code, if  $LL\cap\alplus L\alplus=\emptyset$. What is a witness of non-satisfaction when $LL\cap\alplus L\alplus\not=\emptyset$?  
More generally, if $\phi(X)$ is a language expression involving 
the language variable $X$, constant languages, and standard regular operations including transductions, but not complementation, what is a witness of non-satisfaction for $\phi(L)=\emptyset$?
A flat witness is simply a word in $\phi(L)$. 
In this paper, we investigate the more detailed concept of tree witness of non--satisfaction, which provides witness words in the languages involved in $\phi(L)$.
This topic is related to the satisfaction problem---decide whether $\phi(L)=\emptyset$ providing no witness---which has been studied extensively in the literature (see Section~\ref{sec:sat}).
\pssn
\textbf{Organization and main results.}
The paper is organized as follows. The next section contains a few basic concepts from automata, formal languages, and codes. 
In \underline{Section~\ref{sec:jurg}}, we review some \emph{methods} of defining classes of independent languages---these classes are called independent properties: the methods of partial orders, $n$-ary relations,  the  broad method of J\"urgensen independence, as well as the method of independence expressions $\phi$ which are the main object of this work. We also review quickly the satisfaction  problem (whether a given language satisfies an independent property).
In \underline{Section~\ref{sec:witnesses}}, we define what we mean by a witness of dependence, that is, what a witness is of the non-satisfaction of $\phi(L)=\emptyset$.
A flat witness is any word in $\phi(L)$, but a tree witness tells what $L$-words can be used to make $\phi(L)$ non-empty.
In \underline{Section~\ref{sec:indep}}, we show that every  property definable by an independence expression is a J\"urgensen property.
In \underline{Section~\ref{sec:invertible:ops}}, we revise standard automata constructions to what we call path-invertible constructions that allow one to compute accepting paths of the automata that were used to produce a constructed automaton.
In \underline{Section~\ref{sec:comp:witness}}, we present the algorithm for computing a tree witness of $\phi(L)\not=\emptyset$, for regular languages $L$. We also  discuss the complexity of the algorithm. If $\phi$ is fixed the complexity is polynomial; otherwise, the decision version of the problem is PSPACE-complete.
In \underline{Section~\ref{sec:last}}, we close with a few concluding remarks and a few directions for future research.

\pnsn\textbf{Abbreviations.}\; 
iff = ``if and only if'',\quad
w.r.t. = ``with respect to''
\quad
BFS = breadth-first search, \quad DFS = depth-first search

\section{Basic Notions and Notation}\label{sec:notation}
We assume the reader to be familiar with basics of formal languages, see e.g., \cite{HopcroftUllman,MaSa:handbook,FLhandbookI}. Some notation:
$\al,\alG$ denote arbitrary alphabets; 
$\bar L$ denotes the complement of the language $L$.
Let $u,v,w,x$ be any words over \al. If $w\in L$ then we say that $w$ is an \emdef{$L$-word}. We use standard operations and notation on words and languages; in particular $|w|$ denotes the length of $w$, and \ew denotes the \emdef{empty word}. 
If $w$ is of the form $uv$ then $u$ is called a \emdef{prefix} of $w$ and $v$ is called a \emdef{suffix} of $w$. If $w$ is of the form $uxv$ then $x$ is called an \emdef{infix} of $w$. 
A prefix $u$ of $w$ is called \emdef{proper}, if $u\not=w$. Similar are the concepts of proper suffix and proper infix.
\pnsi
We use the term \emdef{automaton} for the standard notion of a nondeterministic finite automaton (NFA) with possible \ew-transitions, and the acronym DFA for deterministic finite automaton. 
We  assume the reader to be familiar with basics of  transducers, see e.g., \cite{Be:1979,Sak:2009,Yu:handbook}.
A \emdef{(finite) transducer} is a 6-tuple $\trt=(Q,\al,\alG,E,I,F)$ 
such that $Q$ is the set of states, $I\subseteq Q$ is the nonempty set of start (initial) states, $F\subseteq Q$ is the set of final states,  $\al,\alG$ are the  input and output alphabets, respectively, and $E$ is the finite set of transitions (edges). \emph{In this paper, all transducers have equal input and output alphabets: $\al=\alG$.}
We assume that all transducers are in \emdef{standard form:} in each transition $(p,x/y,q)\in E$, we have that the \emdef{input label} $x$ is either the empty word \ew or a symbol in \al, and the same for the \emdef{output label} $y$.
We view a transducer, or automaton, \trt as a labelled directed graph, so we can talk about a \emdef{path} in \trt.  
The \emdef{label} of a transducer path is the pair $(u,v)$ of words such that $u$ (resp. $v$) is the concatenation of the input (resp. output) labels on the transitions of the path.
A path is \emdef{accepting} if the first state of the path is initial and the last state is final.
The \emdef{relation} $R(\trt)$ \emdef{realized by} \trt is the set of labels $(u,v)$ on the accepting paths of \trt.
The set of outputs of \trt on input $u$ is denoted by $\trt(u)$, that is, $v\in\trt(u)$ iff $(u,v)\in R(\trt)$. Also, for a language $L$, $\trt(L)=\cup_{u\in L}\trt(u)$.
A transducer \trt is called \emdef{input-altering}, if $w\notin\trt(w)$ for all words $w$.
If \trs is an automaton, or transducer, then $|\trs|$ denotes the \emdef{size of \trs} = the sum of the numbers of states and transitions of \trs.

We write $\N,\N_0$ for the sets of positive integers and non-negative integers, respectively. If $S$ is a set then $|S|$ denotes the cardinality of $S$.
For a set $S$, the notation $[S]^n$ is used for the 
\emdef{$n$-fold cross product}  $S\times\cdots\times S$ of $S$. This notation is used to avoid confusion with the $n$-th power $L^n$ of a language  $L$.
\pnsi
A \emdef{language property} is simply a class (set) of languages.
An \emdef{independent property} is any  language property \indepP for which every language $L\in\indepP$ is included in a \indepP-maximal language $M\in\indepP$. 
By a \emdef{\indepP-maximal language} (or maximal language with respect to \indepP) we mean a language $M$ in \indepP such that $M\cup\{w\}\notin\indepP$, for all words $w\notin M$.
With very few exceptions, the various concepts of codes in the literature are captured by the concept of independence.
When $L\in \indepP$ we say that $L$ \emdef{satisfies} the property \indepP, or that $L$ is an \emdef{independent language} w.r.t. \indepP.
Usually, an independent property has a name $\varphi$ and is denoted by $\indepP_{\varphi}$, in which case a language $L$ satisfying $\indepP_{\varphi}$ is also called a \emdef{$\varphi$-independent} language.
As stated in \cite{JuKo:handbook}, the prime examples of independent languages are those studied in the theory of codes. 
Each class of codes is defined by a certain condition on the words of the code (the language). For example
\begin{itemize}
    \setlength{\itemsep}{0pt}%
    \setlength{\parskip}{0pt}%
    \vspace{-0.5\topsep}
	\item A language $L$ is a \emdef{prefix code} if no $L$-word is a  prefix of another $L$-word.
	\item A language $L$ is a \emdef{suffix code} if no $L$-word is a  suffix of another $L$-word.
	\item A language $L$ is an \emdef{infix code} if no $L$-word is an  infix of another $L$-word.
	\item A language $L$ is a \emdef{error-detecting for \trt,} where \trt is an input-altering transducer, if \trt cannot output an $L$-word when an $L$-word is used as input.
	\item A language $L$ is a \emdef{uniquely decodable/decipherable code}, or \emdef{UD-code} for short, if every word can be written in at most one way as a concatenation of $L$-words.
\end{itemize}

\section{J\"urgensen Independence \& Independence Expressions}\label{sec:jurg}
An important direction in the research on independent languages is the investigation of systematic methods that allow one to define and study different classes of these languages. 
Next we review some of these methods.

\pmsn
\textbf{Independent languages via partial orders.}
To our knowledge, the first method for defining code properties is the method of \cite{Shyr:Thierrin:relations} via certain binary relations on words. 
A binary relation \rel is a set of pairs of words, that is, 
$\rel\subseteq\al^*\times\al^*$. 
In \cite{Shyr:book}, the author presents the method via binary relations that are partial orders (in the standard definition). Let \po be a partial order on \alstar.
A language $L$ is called \emdef{\po-independent}, if
$u,v\in L$ and $u\po v$ imply $u=v$.
The same class of languages is defined via the strict version `$<$' of `$\po$':
\begin{equation}\label{eq:spo}
	<\>\cap\>[L]^2=\emptyset,
\end{equation}
where $<\,=\,\po-\{(u,u)\mid u\in\alstar\}$, which is the irreflexive and asymmetric version of `$\po$'.
The classic examples of prefix and suffix codes are defined via the partial orders $\ppx, \psx\subseteq[\alstar]^2$, respectively, or via their strict versions; e.g.,
$\ssx=\{(w,z)\mid z \text{ is a proper suffix of } w\}.$
Thus, $L$ is a suffix code iff no two words $u,v\in L$  are related via $\ssx$.

\pmsn
\textbf{Independent languages via $n$-ary relations.}
In general, for integer $n>0$, an $n$-ary relation \rel over some alphabet \al is a subset of the $n$-fold cross product $[\alstar]^n$.
A language $L$ is called \emdef{\rel-independent}, \cite{JurgYu:1991}, if 
\begin{equation}\label{eq:omega}
	\rel\cap[L]^n = \emptyset.
\end{equation}
A classic example of a class of codes that is defined as an \rel-independence, where \rel is ternary, is the class of \emdef{comma-free codes} \cite{BePeRe:2009,JuKo:handbook}:
$
L \text{ is a comma-free code, if } \alplus L\alplus\cap LL=\emptyset.
$
That is, $\rel\cap[L]^3 = \emptyset$, where $\rel=\{(u,v,w)\mid xwy=uv,\text{ for } x,y\in\alplus,u,v,w\in L\}$.

\pmsn
\textbf{J\"urgensen Independence.} 
Let $n$ be a positive integer or $\aleph_0$ (the cardinality of the natural numbers). An \emdef{$n$-independent property}  is a class of languages \indepP such that
\begin{equation}
L\in\indepP \quad\text{iff}\quad \text{ $L'\in\indepP$ for all $L'\subseteq L$ with $|L'|<n$}.
\end{equation}
A \emdef{J\"urgensen property} is simply an $n$-independent property for some $n\in\N\cup\{\aleph_0\}$.
Thus, to tell whether a language $L$ is independent w.r.t. some \indepP, it is sufficient (and necessary) to ensure that every subset of $L$ with $<n$ elements is independent w.r.t. \indepP. 
For example, the class of prefix codes is a \textbf{3}-independent property because a language $L$ is a prefix code iff every subset of $L$ with at most \textbf{2} words is a prefix code.
The J\"urgensen definition can be used to model  independent properties defined by all previous methods, and 
guarantees that every independent language is included in a maximal one \cite{JuKo:handbook}.

\subsection{Independence via language expressions.}\label{sec:lang:expr}
An \emdef{independence expression} $\phi$ is defined inductively as follows: it is the language variable $X$, or a language constant, or one of $\phi_1\odot\phi_2,\, \phi_1\cap\phi_2,\,(\phi_1)^*,\,  \trt(\phi_1)$, where $\phi_1$ and $\phi_2$ are independence expressions and \trt is any transducer constant. 
The symbol $\odot$ denotes concatenation and, as customary, is usually omitted when writing an expression. On the other hand, this symbol is used when we draw the \emdef{tree of the expression $\phi$}, which is denoted by \treef.
Conversely, if $\tree$ is an  expression tree then $\tau$ is the expression that corresponds to  \tree.
Here is an example of an expression and the corresponding tree
\pmsn\qquad\qquad
\begin{tikzpicture}[>=stealth, shorten >=2pt, auto, node distance=1cm, initial text={}]

\node[inner sep=1pt, minimum size=5pt] (S0){$\cap$};
\node[inner sep=1pt, minimum size=5pt] [below left =.5cm and .7cm of S0](S1){$\trt$};
\node[inner sep=1pt, minimum size=5pt] [below  right =.5cm and .8cm of S0](S2){$\odot$};

\node[inner sep=1pt, minimum size=5pt] [left =0.4cm of S1](tree){$\treef_1=$};
\node[inner sep=1pt, minimum size=5pt] [left =1cm of tree](){$\varphi_1=\trt(X)\cap X\alplus,$};

\node[inner sep=1pt, minimum size=5pt] [below  left of =S1](S5){$X$};
\node[inner sep=1pt, minimum size=5pt] [below  left of =S2](S7){$X$};
\node[inner sep=1pt, minimum size=5pt] [below right of =S2] (S4){$\alplus$};

\path[->]
(S0) edge [swap] node {} (S1)
(S0) edge  node {} (S2)
(S1)   edge [swap] node {} (S5)
(S2)   edge [swap] node {} (S7)
(S2)   edge node {} (S4)
;
\end{tikzpicture}
\pmsn
A nonempty expression tree \tree is also denoted as $(\circ,\tree_1,\ldots,\tree_k)$, where $k\ge0$, $\circ$ is the root of \tree and the $\tree_i$'s are the subtrees of \tree.
For example, the above tree $\treef_1$ is also written as $\big(\cap,(\trt,X),(\odot,X,\alplus)\big)$.
\pssi
If $L$ is a language, we shall write $\valf(L)$ for the language that results by applying the operations involved in $\phi$ on $L$ and on any constant languages occurring in $\phi$.
For example, we write 
$
\valf_1(L)=\trt(L)\cap L\alplus.
$
\pssi
A language $L$ is called \emdef{$\phi$-independent}, if $\valf(L)=\emptyset$.


\pssn
\textbf{About the notation \valf and $\valf(L)$.}
The set that consists of $X$ and constant languages is the basic set in the induction basis for defining an independence expression $\phi$.
The expression can be evaluated when $X$ is replaced with a language $L$, and then the value is denoted by $\valf(L)$.
On the other hand, let $B$ be a (basic) set of objects on which the same operations are well-defined, and let $\phi$ be an  expression based on $B$. Then $\valf$ is the value of $\phi$, which is an element of $B$, or a subset of $B$, depending on the actual type of objects $B$.
For example, if $B=\alstar$,  the set of all words, then $\phi$ is an expression on words which evaluates to a subset\footnote{Using the standard assumption that a word $w$ is treated as the singleton set $\{w\}$.} of $\alstar$---e.g., see tree witness definition in Section~\ref{sec:witnesses}.

\pssn
\textbf{About the absence of the complementation and union operations.} The concept of language expression comes from \cite{Okh:2010}, in the research area of language equations, which allows  the operations of complementation and union to be used in expressions. 
The absence of complementation in the context of independent properties is already noted in \cite{Kon:2017} because it can violate the very fundamental aspect that any subset of a \indepP-property must also be a \indepP-property. 
We have also omitted the union operation for a few reasons: 
(i) Most code properties either can be defined via $\cup$-free expressions, or they are unions\footnote{For example, the property of a binary language being a suffix code and a 1-error detecting language is described by the independence expression $(X\cap X\alplus)\cup(X\cap\trt_0(X))$, where the transducet $\trt_0$ is defined in Example~\ref{ex:witness:tree}. } of $\cup$-free expressions in which case a tree witness of one of these $\cup$-free expressions is sufficient.
(ii) Including the operation would add a few more technical challenges to the goal of computing tree witnesses, which would burden the exposition without offering significant advantages.

\begin{lemma}\label{lem:subset}
Let $\phi$ be an independence expression.
If $L'\subseteq L$ then $\valf(L')\subseteq\valf(L)$.
\end{lemma}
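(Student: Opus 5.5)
The proof is a structural induction on the independence expression $\phi$, following the inductive definition in Section~\ref{sec:lang:expr}. The claim is that for every expression $\phi$ and all languages $L'\subseteq L$ we have $\valf(L')\subseteq\valf(L)$. Equivalently, each operation allowed in $\phi$ is monotone with respect to inclusion in each argument. This is exactly where the absence of complementation matters, since complementation is the only standard operation that would fail monotonicity.

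\textbf{Base cases.} If $\phi=X$, then $\valf(L')=L'\subseteq L=\valf(L)$. If $\phi$ is a language constant $C$, then $\valf(L')=C=\valf(L)$.

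\textbf{Inductive step.} Assume the statement holds for $\phi_1$ and $\phi_2$, so that $\valf_1(L')\subseteq\valf_1(L)$ and $\valf_2(L')\subseteq\valf_2(L)$. We check each operation in turn.
\begin{itemize}
\item \emph{Concatenation.} Any word of $\valf_1(L')\valf_2(L')$ has the form $uv$ with $u\in\valf_1(L')\subseteq\valf_1(L)$ and $v\in\valf_2(L')\subseteq\valf_2(L)$, so it lies in $\valf_1(L)\valf_2(L)$.
\item \emph{Intersection.} This is immediate, because $A'\subseteq A$ and $B'\subseteq B$ give $A'\cap B'\subseteq A\cap B$.
\item \emph{Kleene star.} Every word of $(\valf_1(L'))^*$ is $\ew$ or a concatenation $u_1\cdots u_m$ with each $u_i\in\valf_1(L')\subseteq\valf_1(L)$, hence it lies in $(\valf_1(L))^*$. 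Put differently, monotonicity of concatenation gives $(\valf_1(L'))^m\subseteq(\valf_1(L))^m$ for each $m\ge0$, and we take the union over $m$.
\item \emph{Transduction.} We have $\trt(\valf_1(L'))=\bigcup_{u\in\valf_1(L')}\trt(u)$. This union ranges over a subset of the index set of $\bigcup_{u\in\valf_1(L)}\trt(u)$, so it is contained in $\trt(\valf_1(L))$.
\end{itemize}

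No step poses a real obstacle; the only care needed is to state the induction hypothesis for both subexpressions in the binary cases and to verify that every constructor in the grammar is covered.
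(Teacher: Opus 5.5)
Your proof is correct and follows the same route as the paper: structural induction on $\phi$, using that each allowed operation (concatenation, intersection, Kleene star, and transduction, which is the case the paper singles out) is monotone under inclusion. Your version simply writes out each case explicitly where the paper says it ``follows easily.''
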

\begin{proof}
	The statement follows easily by the inductive definition of an independence expression, when we note that, for any transducer \trt, $L'\subseteq L$ implies $\trt(L')\subseteq\trt(L)$.
\end{proof}

\subsection{The Satisfaction Problem}\label{sec:sat}
The \emdef{satisfaction problem} for a \emph{fixed} independent property \indepP is to decide whether a given language $L$ satisfies \indepP; that is whether $L\in\indepP$. 
The language is given as input via a well-defined description method. 
Usually, the input is an automaton describing a regular language. 
This is a reasonable assumption in coding theory, as the languages of interest are usually finite, or regular.
Of course it is perfectly fine to assume that the input is a context-free grammar or a pushdown automaton; however in this case, the satisfaction problem is usually undecidable \cite{JuKo:handbook}. 
A large number of references have contributed to the satisfaction problem for finite or regular languages, e.g.,~\cite{Levenshtein:61R,JurgSalYu:1994,JuKo:handbook,sequencesII,Head:Weber:decision,FernReinStai:2007,HanSalomaa:2011,KoHanSalomaa:2021}.
\pnsi
The method of describing independent properties via language expressions is an example of a `formal' method, which 
allows one to study the  \emdef{uniform satisfaction problem}: 
given the description \descrf of an independent property and the description of a language $L$, decide whether $L$ satisfies \indepf.
Two other formal methods for language independence are the method of regular trajectories \cite{Dom:2004} and the method of input-altering transducers \cite{DudKon:2012,KMMR:2018}, both of which are special cases of the method of language expressions.
For example, for an input-altering transducer \trt, \emdef{a language $L$ is \trt-independent} if $\trt(L)\cap L=\emptyset$.\footnote{The transducer method is expressible enough to describe error-detecting languages for many error combinations, including those in~\cite{Kon:2001}. In particular, if \trt is a transducer that flips at least 1 and at most $k$ bits of the input, then a language $L$ is $k$-error detecting iff $\trt(L)\cap L=\emptyset$.} 
Another formal method is the method of trajectory hypersets  \cite{DomSal:2006} which appears to be disjoint from the method of language expressions.

\section{What is a Witness of Dependence/Non-satisfaction?}\label{sec:witnesses}
Consider an independence expression $\phi$ and the property $\indepP_{\varphi}$. If a language $L$ does not satisfy the property then what is a witness of non-satisfaction?

\pssn\textbf{Flat witness.}
A \emdef{flat witness} for $\phi,L$ is any word in $\valf(L)$. As an example, suppose that
\begin{equation}\label{eq:phi:ex}
\varphi_1=\trt(X)\cap X\alplus.
\end{equation} 
Then, a flat witness for $\phi_1,L$ is any word $w\in\trt(L)\cap L\alplus$. 

\pssn\textbf{Tree witness.}
A flat witness $w$ for $\phi_1,L$ satisfies $w\in\trt(w_1)\cap w_2w_3$, for some words $w_1,w_2\in L$ and $w_3\in\alplus$. 
However, $w$ by itself does not give us any information about the two $L$-words $w_1,w_2$ that violate the property $\indepP_{\phi_1}$.
Moreover, the case where the independence expression $\phi$ contains the Kleene star operation should be handled with care.
For example, consider the independence expression
\begin{equation}\label{eq:phi2:ex}
\varphi_2=\trt(X^*)\cap X^*.
\end{equation} 
This is of interest when the variable $X$ represents a UD-code, the input-altering transducer \trt represents a noisy channel, and we want to know whether the set of messages $X^*$ is error-detecting for \trt.
A flat witness for $\phi_2$ and some language $L$ would be a word $w\in\trt(w_1)\cap w_2$ with $w_1,w_2\in L^*$, which gives us no information about which $L$-words are concatenated to get the outcomes $w_1,w_2$.
Moreover, simply picking two words $w_1,w_2\in L$ such that $\trt(w_1^*)\cap w_2^*\not=\emptyset$ does not always work. 
For example, if $\trt$ is input-altering deleting at most 1 bit in any consecutive six bits of a message  and $L=\{02,011,110\}$ then $\trt(w_1^*)\cap w_2^*=\emptyset$ for any $w_1,w_2\in L$, but $\trt(011\,011\,02)\cap(110\,110)\not=\emptyset$. 
For this reason, we define below a more informative  concept of  witness which, 
from a technical convenience point of view, is given in terms of the tree of the expression. 
Note that witnesses of non-satisfaction are computed in \cite{KMMR:2018} for languages that are not \trt-independent as well as for languages that are not UD-codes. 
These types of witnesses are  simpler than what we discuss here.

\begin{definition}\label{def:witness:tree} 
	Let $\phi$ be an independence expression and let $L$ be a language that does not satisfy $\indepP_{\phi}$. 
	A \emdef{tree witness} for $\phi,L$ is an expression tree \tree 
	that evaluates to a \emph{nonempty} subset of $\valf(L)$, that is, 
	$\emptyset\not=\valt\subseteq\valf(L)$, 
	and \tree results from \treef  as follows:
	\begin{enumerate}
    \setlength{\itemsep}{1pt}%
    \setlength{\parskip}{0pt}%
    \vspace{-0.7\topsep}
		\item visit all the internal nodes `*' of \treef in BFS mode, and for each node `*': either replace the node and its subtree with a leaf containing the empty word $\ew$; or replace the node with `$\odot$' and its subtree $\treef_1$, say,  with one or more copies of $\treef_1$ (so now the node should have one or more identical subtrees);\; then,
		\item  visit each leaf $X$  and replace it with an $L$-word;
		\item visit each constant language  leaf $C$ and replace  it with a $C$-word.
	\end{enumerate}
	The (finite) \emph{set of $L$-words used in step~2 above is denoted by $W_{\tree}$}.
	The expression $\tau$ that corresponds to the tree witness is called an \emdef{expression witness for} $\phi,L$.   
\end{definition}

\begin{remark}\label{rem:tree:wit}
	If $\phi$ is a \emdef{basic expression}, that is, $\phi$ is simply the language variable $X$ or a language constant $C$ then \treef consists of just the root ($X$ or $C$) which is also a leaf. If  $C,L\not=\emptyset$ then any tree witness \tree for $\phi,L$ consists of just a root $w$ (with $w\in X$ or $w\in C$) which is also a leaf, and is also a flat witness.
\end{remark}

\begin{remark}\label{rem:tree:wit:correct}
	Let $\phi$ be an independence expression and let $L$ be a language. 
	If $\valf(L)\not=\emptyset$, that is $L$ does not satisfy $\indepP_{\phi}$, then there is indeed a tree witness \tree for $\phi,L$. 
	While this can be shown rigorously via induction on the structure of \treef, the main idea is based on the customary practice of finding  $L$-words for $\valf(L)\not=\emptyset$---see Example~\ref{ex:witness:tree} below.  
\end{remark}

\begin{example}\label{ex:witness:tree}
Let $\al=\{0,1\}$ and $C=00\alstar$.
	Consider the expression $\phi_3=(XX)^*\cap\,\trt_0\big(C\,\trt_0(X)\big)$ and the language $L=00^*1$, where 
\begin{fsm}[node distance=2.2cm, every state/.style={inner sep=1pt,minimum size=0.7cm}]
	\node [state,initial] (q0) {$s_0$};
	\node [node distance=0.75cm,left=of q0,anchor=east] {$\trt_0\colon$};
	\node [state,accepting,right of=q0] (q1) {$f_0$};
	\path (q0) edge [loop above] node [above] {$b/b$} ()
		(q0) edge node [above] {$0/1$} (q1)
		(q1) edge [loop above] node [above] {$b/b$} ()	
		;
\end{fsm}	
the transducer $\trt_0$ changes exactly one 0 bit of an input word to the bit 1; for example, $\trt_0(1100)=\{1110,1101\}$.\footnote{It is interesting to note that a language $L$ is 1-error detecting if and only if it is $\trt_0$-independent, that is, it is sufficient to detect one 0/1 error  to ensure that $L$ can detect any 1-bit error. 
    }
The word 01010101 is a flat witness for $\phi_3, L$.
	Below we show: the tree $\treef_3$ on the left and  a tree witness $\tree_3$ for $\treef_3,L$ on the right.
\pmsn
\qquad
\begin{tikzpicture}[>=stealth, shorten >=2pt, auto, node distance=1cm, initial text={}]

\node[inner sep=1pt, minimum size=5pt] (S0){$\cap$};
\node[inner sep=1pt, minimum size=5pt] [below left =.5cm and .7cm of S0](S1){$*$};
\node[inner sep=1pt, minimum size=5pt] [below  right =.5cm and .7cm of S0](S2){$\trt_0$};
\node[inner sep=1pt, minimum size=5pt] [left =0.5cm of S0](){$\treef_3=$};

\node[inner sep=1pt, minimum size=5pt] [below = 0.7cm of S1](doto){$\odot$};
\node[inner sep=1pt, minimum size=5pt] [below right = of S2] (dotth){$\odot$};

\node[inner sep=1pt, minimum size=5pt] [below  left =0.7cm and 0.7cm of doto](w1){$X$};
\node[inner sep=1pt, minimum size=5pt] [below  = 0.7cm of doto](w2){$X$};

\node[inner sep=1pt, minimum size=5pt] [below  = 0.65cm of  dotth] (trt){$\trt_0$};
\node[inner sep=1pt, minimum size=5pt] [below left = 0.65cm and 0.5cm of dotth] (w5){$C$};
\node[inner sep=1pt, minimum size=5pt] [below = 0.7cm of trt] (w6){$X$};

\node[right = of S2, node distance=1.7cm, brown](){$\Longrightarrow$};

\path[->]
(S0) edge [swap] node {} (S1)
(S0) edge  node {} (S2)
(S1)   edge [swap] node {} (doto)
(S2)   edge node {} (dotth)

(doto) edge node {} (w1)
(doto) edge node {} (w2)

(dotth)   edge node {} (trt)
(dotth)   edge node {} (w5)
(trt)   edge node {} (w6)
;
\end{tikzpicture}
\begin{tikzpicture}[>=stealth, shorten >=2pt, auto, node distance=1cm, initial text={}]

\node[inner sep=1pt, minimum size=5pt] (S0){$\cap$};
\node[inner sep=1pt, minimum size=5pt] [below left =.5cm and .7cm of S0](S1){$\odot$};
\node[inner sep=1pt, minimum size=5pt] [below  right =.5cm and .7cm of S0](S2){$\trt_0$};
\node[inner sep=1pt, minimum size=5pt] [left =0.5cm of S0](){$\tree_3=$};

\node[inner sep=1pt, minimum size=5pt] [below left = of S1](doto){$\odot$};
\node[inner sep=1pt, minimum size=5pt] [below right = of S1](dott){$\odot$};
\node[inner sep=1pt, minimum size=5pt] [below right = of S2] (dotth){$\odot$};

\node[inner sep=1pt, minimum size=5pt] [below  left =0.7cm and 0.7cm of doto](w1){$01$};
\node[inner sep=1pt, minimum size=5pt] [below  = 0.7cm of doto](w2){$01$};

\node[inner sep=1pt, minimum size=5pt] [below  left =0.7cm and 0.7cm of dott](w3){$01$};
\node[inner sep=1pt, minimum size=5pt] [below  = 0.7cm of dott](w4){$01$};

\node[inner sep=1pt, minimum size=5pt] [below  = 0.65cm of  dotth] (trt){$\trt_0$};
\node[inner sep=1pt, minimum size=5pt] [below left = 0.65cm and 0.5cm of dotth] (w5){$0001$};
\node[inner sep=1pt, minimum size=5pt] [below = 0.7cm of trt] (w6){$0001$};

\path[->]
(S0) edge [swap] node {} (S1)
(S0) edge  node {} (S2)
(S1)   edge [swap] node {} (doto)
(S1)   edge node {} (dott)
(S2)   edge node {} (dotth)

(doto) edge node {} (w1)
(doto) edge node {} (w2)
(dott) edge node {} (w3)
(dott) edge node {} (w4)

(dotth)   edge node {} (trt)
(dotth)   edge node {} (w5)
(trt)   edge node {} (w6)
;
\end{tikzpicture}
\pssn
We have that $W_{\tree_3}=\{01,0001\}$ and 
the expression witness corresponding to the tree witness is 
the string
$\big((0101)(0101)\big)\cap\,\trt_0\big(0001\,\trt_0(0001)\big)$, where we have omitted $\odot$'s. 
\end{example}


\section{Expression Independence $\subsetneq$ J\"urgensen Independence}\label{sec:indep}
Here we show that, for every independence expression $\phi$, the property $\indep{\phi}$ is a J\"urgensen property.
This implies that every $\phi$-independent language is included in a maximally $\phi$-independent language \cite{JuKo:handbook}.

\begin{lemma}\label{lem:tree:subset}
	Let $\phi$ be an independence expression and let $L$ be a language with $\valf(L)\not=\emptyset$ (that is $L$ does not satisfy $\indepP_{\phi}$). If \tree is a tree witness for $\phi,L$ 
	then
	$\valt\subseteq \valf(W_{\tree})$.
\end{lemma}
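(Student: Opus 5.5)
We prove the statement by structural induction on the expression tree $\treef$, comparing each subtree of the witness $\tree$ with the corresponding subtree of $\treef$. The claim we actually carry through the induction is slightly more general. Let $\treef'$ be a subtree of $\treef$, with corresponding expression $\phi'$. Let $\tree'$ be any tree obtained from $\treef'$ by the three steps of Definition~\ref{def:witness:tree}: expand or erase each star node, replace each $X$-leaf by an $L$-word, and replace each constant leaf $C$ by a $C$-word. Let $W'$ be the set of $L$-words placed at the $X$-leaves of $\tree'$. Then we claim $\valt'\subseteq\dot\phi'(W')$.

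We drop the nonemptiness requirement and the condition $\valt\subseteq\valf(L)$, since they play no role in the inclusion. The lemma is the special case $\treef'=\treef$, $\tree'=\tree$, and $W'=W_{\tree}$.

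The base cases are immediate.
\begin{itemize}
\item If $\phi'=X$, then $\tree'$ is a single leaf $w\in L$, so $W'=\{w\}$ and $\valt'=\{w\}=\dot\phi'(W')$.
\item If $\phi'=C$, then $\tree'$ is a leaf $c\in C$, so $\valt'=\{c\}\subseteq C=\dot\phi'(W')$.
\end{itemize}

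For the inductive steps we combine the induction hypothesis with Lemma~\ref{lem:subset}.
\begin{itemize}
\item \emph{Concatenation and intersection.} Suppose $\phi'=\phi_1\odot\phi_2$ or $\phi_1\cap\phi_2$, with witness subtrees $\tree_1,\tree_2$ whose $L$-word sets are $W_1,W_2$, so that $W'=W_1\cup W_2$. The induction hypothesis gives $\valt_i\subseteq\valf_i(W_i)$. Lemma~\ref{lem:subset} gives $\valf_i(W_i)\subseteq\valf_i(W')$. Since concatenation and intersection are monotone in each argument, $\valt'\subseteq\dot\phi'(W')$.
\item \emph{Transducer.} The case $\trt(\phi_1)$ is the same argument, using monotonicity of $\trt(\cdot)$.
\end{itemize}

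The main case is the star, $\phi'=\phi_1^*$.
\begin{itemize}
\item If the node was replaced by a leaf $\ew$, then $\valt'=\{\ew\}\subseteq\dot\phi_1(W')^*$, with $W'=\emptyset$.
\item Otherwise the node became $\odot$ with subtrees $\tree_1^{(1)},\ldots,\tree_1^{(m)}$, where $m\ge1$. Each $\tree_1^{(j)}$ is a copy of $\treef_1$ that then underwent steps 1--3 independently; deeper stars may have been expanded differently and different $L$-words chosen. So each $\tree_1^{(j)}$ is again obtained from $\treef_1$ by the same procedure, and the induction hypothesis applies to it with its own word set $W_j$, where $W'=\bigcup_j W_j$.
\item Then $\valt'=\valt_1^{(1)}\cdots\valt_1^{(m)}\subseteq \valf_1(W_1)\cdots\valf_1(W_m)\subseteq \valf_1(W')^m\subseteq \valf_1(W')^*$, again using Lemma~\ref{lem:subset}.
\end{itemize}

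The obstacle is therefore mainly bookkeeping. Definition~\ref{def:witness:tree} processes stars in BFS order on the whole tree, so we must check that the subtrees of a witness really are witnesses of the corresponding subtrees of $\treef$ in the generalized sense. This is why the induction hypothesis is stated for arbitrary subtrees, without the nonemptiness or $\valf(L)$ conditions. Once that formulation is fixed, each case is a one-line monotonicity argument.
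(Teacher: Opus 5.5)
Your proof is correct and takes the same route as the paper: structural induction on $\treef$. Each case is closed by Lemma~\ref{lem:subset} together with $W_{\tree_i}\subseteq W_{\tree}$, and the star case uses the chain ending in $\valf_1(W_{\tree})^n\subseteq\valf_1(W_{\tree})^*$. Your generalized hypothesis only makes explicit two points the paper leaves implicit: it drops nonemptiness and $\valt\subseteq\valf(L)$, and it allows the copies under an expanded star to differ in how deeper stars were expanded (the paper calls the copies identical to $\treef_1$ except for leaves).
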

\begin{proof} 
The informal argument is that the leafs of the tree witness \tree represent a specific arrangement of the words on which $\phi$ is evaluated, whereas $\valf(W_{\tree})$ represents the set of the evaluations of $\phi$ on the arrangements of words from the set $W_{\tree}$, which include the specific arrangement of the words in the leafs of \tree. 
The statement can be shown by induction on the structure of $\treef$. We show a few parts of the induction.
If $\treef$ is a basic expression tree (as in Remark~\ref{rem:tree:wit}) then the statement follows. 
Now suppose that $\treef$ is a composite expression tree that has a root (which is an operation) and one or more subtrees $\treef_1,\ldots$ such that the statement holds for the tree witnesses $\tree_i$ for $\phi_i,L$, that is, $\valt_i\subseteq\valf_i(W_{\tree_i})$.
We use Lemma~\ref{lem:subset} and the fact that $W_{\tree_i}\subseteq W_{\tree}$.
\pnsi
If the root is `$\cap$' then the tree is $\tree=(\cap,\tree_1,\tree_2)$ and $\valt=\valt_1\cap\valt_2$, so 
$$\valt\subseteq\valf_1(W_{\tree_1})\cap \valf_2(W_{\tree_2})
\subseteq \valf_1(W_{\tree})\cap \valf_2(W_{\tree})=\valf(W_{\tree}),$$
as required. 
Similarly, we can settle the cases where the root is `$\odot$' or a transducer \trt. 
The last case is when the root is `*': then, $\treef$ is of the form $(*,\treef_1)$.
If $\tree=(\ew)$ then $\valt=\{\ew\}=(W_{\tree})^*=\valf(W_{\tree}),$ as required.
If $\tree=(\odot,\tree_1,\ldots,\tree_n)$, for some $n\ge1$, then all $\tree_i$'s are identical except possibly for their leafs, and  similarly each $\tree_i$ is identical to $\treef_1$ except for the leafs.
Then,
\[
\valt=(\valt_1\odot\cdots\odot\valt_n)
\subseteq\big(\valf_1(W_{\tree_1})\odot\cdots\odot \valf_1(W_{\tree_n})\big)
\subseteq\big(\valf_1(W_{\tree})\big)^n
\subseteq\big(\valf_1(W_{\tree})\big)^*=\valf(W_{\tree}),
\]
as required.
\end{proof}

\begin{theorem}\label{th:indep}
	Let  $\phi$ be an independence expression. Then, the property $\indepP_{\phi}$ is a J\"urgensen property. Moreover, if $\phi$ contains no Kleene star operation then $\indepP_{\phi}$  is an $n$-independent property for some $n<\aleph_0$.
\end{theorem}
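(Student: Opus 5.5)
The plan is to prove that $\indepP_{\phi}$ is $\aleph_0$-independent in general, and $n$-independent for a concrete finite $n$ when $\phi$ has no Kleene star. That is, we show that $L\in\indepP_{\phi}$ iff every finite subset $L'\subseteq L$ (respectively, every subset with fewer than $n$ elements) satisfies $\valf(L')=\emptyset$.

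The ``only if'' direction is immediate from Lemma~\ref{lem:subset}. If $\valf(L)=\emptyset$ and $L'\subseteq L$, then $\valf(L')\subseteq\valf(L)=\emptyset$, whatever the size of $L'$.

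For the ``if'' direction we argue by contraposition. Suppose $\valf(L)\neq\emptyset$. By Remark~\ref{rem:tree:wit:correct} there is a tree witness $\tree$ for $\phi,L$. Its tree is finite, since each star node is replaced by finitely many copies, so the set $W_{\tree}$ of $L$-words placed at its $X$-leaves is a finite subset of $L$. By Definition~\ref{def:witness:tree} we have $\valt\neq\emptyset$, and Lemma~\ref{lem:tree:subset} gives $\emptyset\neq\valt\subseteq\valf(W_{\tree})$. Hence the finite subset $L'=W_{\tree}$ of $L$ already fails $\indepP_{\phi}$, and $|W_{\tree}|<\aleph_0$. This shows that $\indepP_{\phi}$ is $\aleph_0$-independent, hence a J\"urgensen property.

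For the star-free case, note that step~1 of Definition~\ref{def:witness:tree} does nothing. Every tree witness then has exactly the shape of $\treef$, with leaves relabelled. So $|W_{\tree}|\le k$, where $k$ is the number of occurrences of $X$ among the leaves of $\treef$, since distinct leaves may carry the same word. Setting $n=k+1$ (or $n=2$ if $k=0$), the argument above yields a subset $L'\subseteq L$ with $|L'|<n$ and $\valf(L')\neq\emptyset$. Therefore $\indepP_{\phi}$ is $n$-independent with $n<\aleph_0$.

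The main obstacle is relying on the existence of a tree witness (Remark~\ref{rem:tree:wit:correct}), which the paper only sketches. If needed, I would prove it by structural induction on $\phi$, with the following statement: for every $w\in\valf(L)$ there is a witness tree $\tree$ with $w\in\valt$.
\begin{itemize}
\item \emph{Leaves and $\cap$:} these are direct. For $\cap$, choose the witness trees for $w$ in each branch independently.
\item \emph{Concatenation:} split $w=w_1w_2$ and combine witnesses for $w_1$ and $w_2$.
\item \emph{Transducer node $\trt$:} pick a $u\in\valf_1(L)$ with $w\in\trt(u)$ and take a witness for $u$.
\item \emph{Star:} if $w=\ew$ use the leaf $\ew$. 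Otherwise write $w=u_1\cdots u_m$ with each $u_i\in\valf_1(L)$, take witnesses $\tree_i$ for the $u_i$, and join them under a $\odot$ node, so that the subtrees agree with $\treef_1$ up to leaves, as the definition requires.
\end{itemize}
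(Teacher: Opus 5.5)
Your proposal is correct and follows essentially the same route as the paper: the ``only if'' direction comes from Lemma~\ref{lem:subset}, and the ``if'' direction takes a tree witness $\tree$ and applies Lemma~\ref{lem:tree:subset} to the finite set $W_{\tree}$, with the star-free case handled by observing that $\tree$ has the shape of $\treef$. Your refinements are sound but leave the argument unchanged: you cite Remark~\ref{rem:tree:wit:correct} for the existence of $\tree$ and sketch its inductive proof, whereas the paper attributes existence loosely to Lemma~\ref{lem:tree:subset}, and you bound $n$ by the number of $X$-leaves rather than by all leaves of $\treef$.
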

\begin{proof}
	First we show that $L\in\indepP_{\phi}$ if and only if $L'\in\indepP_{\phi}$ for all finite $L'\subseteq L$.
	The `only if' part follows from Lemma~\ref{lem:subset}.
	For the `if' part, assume that $\valf(L')=\emptyset$ for all finite subsets $L'\subseteq L$, but suppose for the sake of contradiction that $\valf(L)\not=\emptyset$.
	Lemma~\ref{lem:tree:subset} implies that there is a tree witness \tree for $\phi,L$ such that\, $\emptyset\not=\valt\subseteq\valf(L')$ for some finite $L'\subseteq L$. Hence, $\valf(L')\not=\emptyset$, a contradiction!
	\pnsi
	Now suppose that $\phi$ contains no Kleene star operation. 
	We show that there is $n<\aleph_0$ such that $L\in\indepP_{\phi}$ if and only if $L'\in\indepP_{\phi}$ for all finite $L'\subseteq L$ with $|L'|<n$.
	Let $n$ = 1 plus the number of leafs in \treef.
	The `only if' part follows again from Lemma~\ref{lem:subset}.
	For the `if' part assume that the premise holds, but $L\notin\indepP_{\phi}$, that is, $\valf(L)\not=\emptyset$. 
	Definition~\ref{def:witness:tree} implies that there is a tree witness \tree for $\phi,L$ that has an identical structure as the tree \treef (the only difference being that the  leafs of \treef correspond to word leafs of \tree). By Lemma~\ref{lem:tree:subset}, we have
	$\emptyset\not=\valt\subseteq\valf(L')$ for some $L'\subseteq L$ with $|L'|<n$, which contradicts the premise.
\end{proof}

\begin{remark}
	There are uncountably many J\"urgensen properties (see \cite{DudKon:2012}), but only countably many independence expressions. Therefore, there are J\"urgensen properties that are not expressible via any independence expression.
\end{remark}

\section{Path invertible versions of automata constructions}\label{sec:invertible:ops}
We consider the standard language operations 
$
\odot,\; \cap,\; *,\; \trt, 
$
where \trt is any transducer, as well as the known corresponding operations (constructions) on automata, \cite{Yu:handbook,EspBlo:2023}.
We refine those automata constructions in a way that, if we apply an operation on some given automata $\auta_1,\ldots,\auta_k$  and the resulting automaton \auta has an accepting path $\bm p$, say, then the path $\bm p$ can be used to extract accepting paths of the automata $\auta_1,\ldots,\auta_k$. 
The labels (words) on these paths can be used to compute  witness words of $\lang(\auta)\not=\emptyset$.

\pssn
\textbf{Concatenation operation $\odot$.} 
Consider automata $\auta_1,\ldots,\auta_k$, for some $k\ge2$. The automaton $\auta=\auta_1\odot\cdots\odot\auta_k$ consists (as customary) of one copy of each $\auta_i$ with the following specifications:
(i) each state $q$ of some $\auta_i$ is copied as $(i,q)$ in \auta; 
(ii) the start state of \auta is $(1,s)$, where $s$ is the start state of $\auta_1$;  
(iii) the final states of \auta are the states $(k,f)$, for each final state $f$ of $\auta_k$; 
(iv) in addition to the transitions corresponding to those of the $\auta_i$'s, \auta also has, for all $i$, the transitions $\big((i,f),\ew,(i+1,s)\big)$ that connect the copy of every final state $f$ of $\auta_i$ to the copy of the start state $s$ of $\auta_{i+1}$. 
The automaton \auta accepts the language $\lang(\auta_1)\cdots\lang(\auta_k)$. 
\pnsn
\emph{\underline{Inversion rule}:} given any accepting path $\bm p$ of \auta having some label $w$, if we remove the transitions $\big((i,f),\ew,(i+1,s)\big)$, we can effectively compute  accepting paths $\bm p_1,\ldots,\bm p_k$ of $\auta_1,\ldots,\auta_k$ having some labels $w_1,\ldots,w_k$ such that $w=w_1\cdots w_k$.

\pssn
\textbf{Star operation $*$.} 
Consider automaton $\auta_1$. The automaton $\auta=\auta_1^*$ consists (as customary) of one copy of  $\auta_1$ with the following specifications: 
(i) each state $q$ of  $\auta_1$ is copied as $(1',q)$ in \auta;
(ii) there is a new state $(0',0)$ in \auta which is the start state;
(iii) for each copy $(1',f)$ of a final state $f$ of $\auta_1$, we add in \auta the state $(2',f)$;
(iv) the states $(0',0)$ and $(2',f)$ are the final states of \auta, for all $f$ final in $\auta_1$;
(v) in addition to the transitions corresponding to those of  $\auta_1$, \auta also has the following transitions: 
$\big((0',0),\ew,(1',s)\big)$, where $s$ is the start state of $\auta_1$; 
$\big((1',q),a,(2',f)\big)$ for each transition $(q,a,f)$ of $\auta_1$ with $f$ final; and 
$\big((2',f),\ew,(1',s)\big)$ for each final state $f$ of $\auta_1$.
The automaton \auta accepts the language $\lang(\auta_1)^*$.
Technically, the prime in $(1',q)$ in states is not necessary, but it helps as a reminder that the construction is about the `*' operation.
\pnsn
\emph{\underline{Inversion rule}:}  given an accepting path $\bm p$ of \auta having some label $w$, we can look at all occurrences of states $(2',f)$ in $\bm p$ and effectively compute accepting paths $\bm p_1,\ldots,\bm p_n$ of $\auta_1$, for some $n\ge0$, such that the concatenation of the labels of those paths is equal to $w$.
In the special case where $\bm p$ is simply the empty path starting at state $(0,0)$, we have that $n=0$ and $w=\ew$.

\begin{example}\label{ex:invertible1}
Consider the following automaton $\auta_1$ accepting the language  $00^*1$.\begin{fsm}[node distance=1.6cm, every state/.style={inner sep=1pt,minimum size=0.7cm}]
	\node [state,initial] (s) {$s$};
	\node [node distance=0.75cm,left=of s,anchor=east] {$\auta_1\colon$};
	\node [state,right of=s] (q) {$q$};
	\node [state,accepting,right of=q] (f) {$f$};
	\path 
		(s) edge node [above] {$0$} (q)
		(q) edge [loop above] node [right] {$\;0$} ()	
		(q) edge node [above] {$1$} (f)
		;
\end{fsm}	
The automaton $(\auta_1\odot\auta_1)^*$ is shown next, where we abbreviate a simple pair $(p,q)$ as $pq$.
\pssn
\begin{fsm}[node distance=1.6cm, every state/.style={inner sep=1pt,minimum size=0.7cm}]
	\node [state,initial,accepting] (zz) {$0'0$};
	\node [state,right of=zz] (s) {$1',1s$};
	\node [node distance=0.75cm,left=of zz,anchor=east] {$(\auta_1\odot\auta_1)^*\colon$};
	\node [state,right of=s] (q) {$1',1q$};
	\node [state,right of=q] (f) {$1',1f$};

	\node [state,right of=f] (t) {$1',2s$};
	\node [state,right of=t] (r) {$1',2q$};
	\node [state,accepting,right of=r] (g) {$1',2f$};
	\node [state,accepting,below left of=r] (g2) {$2',2f$};
	\path 
		(zz) edge node [above] {$\ew$} (s)
		(s) edge node [above] {$0$} (q)
		(q) edge [loop above] node [right] {$\;0$} ()	
		(q) edge node [above] {$1$} (f)
		(f) edge node [above] {$\ew$} (t)
		(t) edge node [above] {$0$} (r)
		(r) edge [loop above] node [right] {$\;0$} ()	
		(r) edge node [above] {$1$} (g)
		(r) edge node [right] {$1$} (g2)
		(g2) edge [bend left] node [above] {$\ew$} (s)
		;
\end{fsm}	
\end{example}

\pnsn
\textbf{Transducer  operation \trt.} 
Consider a fixed, but arbitrary, transducer \trt and an automaton $\auta_1$. 
We can construct an automaton $\auta=\trt(\auta_1)$ accepting the language $\trt\big(\lang(\auta_1)\big)$ using a standard product construction\footnote{In practice, one does product constructions incrementally, starting from the start state $(s_1,s_2)$ and adding new states and transitions as long as each new state is reached 
from a previously added one.}:
(i) the states of \auta are pairs $(q_1,q_2)$ of states from $\trt,\auta_1$, respectively; 
(ii) $(s_1,s_2)$ is the start state of \auta, where $s_1,s_2$ are the start states of $\trt,\auta_1$, respectively;
(iii) $(f_1,f_2)$ are the final states of \auta, where $f_1,f_2$ are any final states of $\trt,\auta_1$, respectively; and 
(iv) the transitions of \auta are of the form  $\big((p_1,p_2),b,(q_1,q_2)\big)$, when $(p_1,a/b,q_1)$, $(p_2,a,q_2)$ are transitions of $\trt,\auta_1$, respectively\footnote{{Product constructions}\label{foot} are usually based on matching equal (parts of) labels in the transitions of the two machines. However, if the machines involved have  transitions with label \ew then, before the product construction begins, one adds to each state a self-loop with label \ew (for automaton states), or with label $\ew/\ew$ (for transducer states). 
}. 
This construction, however, cannot give us a witness word $w$ that would be used as input to \trt as in the expression~\eqref{eq:phi:ex} because the automaton records only the output of the transducer. 
\emph{For this reason, we make the following simple modification to the construction of \auta:}  its transitions are of the form  $\big((p_1,p_2),ab,(q_1,q_2)\big)$, when $(p_1,a/b,q_1)$, $(p_2,a,q_2)$ are transitions of $\trt,\auta_1$, respectively.
In general, if we need to apply a transducer operation \trs, say, to \auta that has some labels $\beta b$, where $\beta$ could be absent, then the transitions in the product machine would be of the form $\big((p_1,p_2),\beta bc,(q_1,q_2)\big)$, when $(p_2,\beta b,q_2)$, $(p_1,b/c,q_1)$  are transitions of $\auta,\trs$, respectively.
This guarantees that the output of repeated transducers is preserved, as well as the  input to the initial transducer that lead to that output. In case the matching label $b$ (or $c$) above happens to be \ew, in the product label we use the special symbol \ews to record that the matching came from \ew: $\beta \ews$.
Of course  labels $\beta b$ of the resulting machines are not in accordance with the transducer labels defined in Section~\ref{sec:notation}. However, we can \emph{treat the machine as an ordinary automaton if each transition label $\beta b$ is read as $b$}.
\pnsn
\emph{\underline{Inversion rule}:} given any accepting path $\bm p$ of \auta with some label $\beta w$, we can effectively compute an accepting path of $\auta_1$ with label~$\beta$ by changing each transition $\big((p_1,p_2),\beta_1 b,(q_1,q_2)\big)$ in $\bm p$ to $(p_1,\beta_1,q_1)$.
\pssn
\textbf{Intermediate note.} The transition labels $\beta b$ produced by a transducer operation cause no changes to the previous constructions for $\odot,*$, as all transition labels in the constructed automaton are copies of transition labels in the automata used as operands to these operations. 
On the other hand, for the below case of the $\cap$-construction, the labels need to be treated a little carefully. 

\begin{example}\label{ex:invertible2}
Consider the automaton $\auta_1$ in Example~\ref{ex:invertible1}, the transducer $\trt_0$ in Example~\ref{ex:witness:tree}, and the  automaton $\auta_2$ accepting the language  $00\alstar$:
\begin{fsm}[node distance=1.6cm, every state/.style={inner sep=1pt,minimum size=0.7cm}]
	\node [state,initial] (s) {$s$};
	\node [node distance=0.75cm,left=of s,anchor=east] {$\auta_1\colon$};
	\node [state,right of=s] (q) {$q$};
	\node [state,accepting,right of=q] (f) {$f$};

	\node [state,node distance=2.55cm,initial,right of=f] (t) {$t$};
	\node [node distance=0.75cm,left=of t,anchor=east] {$\auta_2\colon$};
	\node [state,right of=t] (r) {$r$};
	\node [state,accepting,right of=r] (g) {$g$};
	\node [state,initial,node distance=2.55cm,right of=g] (q0) {$s_0$};
	\node [node distance=0.75cm,left=of q0,anchor=east] {$\trt_0\colon$};
	\node [state,accepting,right of=q0] (q1) {$f_0$};
	\path 
		(s) edge node [above] {$0$} (q)
		(q) edge [loop above] node [above] {$0$} ()	
		(q) edge node [above] {$1$} (f)
		(t) edge node [above] {$0$} (r)
		(g) edge [loop above] node [above] {$b$} ()	
		(r) edge node [above] {$0$} (g)
        (q0) edge [loop above] node [above] {$b/b$} ()
		(q0) edge node [above] {$0/1$} (q1)
		(q1) edge [loop above] node [above] {$b/b$} ()	
		;
\end{fsm}	
The automaton $\auta_2\odot\trt_0(\auta_1)$ is shown next, where we abbreviate again a simple pair $(p,q)$ as $pq$.
\pssn
\begin{fsm}[node distance=1.6cm, every state/.style={inner sep=1pt,minimum size=0.7cm}]
	\node [state,node distance=2.55cm,initial] (t) {$1t$};
	\node [node distance=0.75cm,left=of t,anchor=east] {$\auta_2\odot\trt_0(\auta_1)\colon$};
	\node [state,right of=t] (r) {$1r$};
	\node [state,right of=r] (g) {$1g$};

	\node [state,right of=g] (s) {$2,s_0s$};
	\node [state,right of=s] (q1) {$2,f_0q$};
	\node [state, above of=q1] (q) {$2,s_0q$};
	\node [state,accepting, right of=q1] (f) {$2,f_0f$};

	\path 
		(t) edge node [above] {$0$} (r)
		(g) edge [loop above] node [above] {$b$} ()	
		(r) edge node [above] {$0$} (g)
		(g) edge node [above] {$\ew$} (s)
		(s) edge node [left] {$00$} (q)
		(s) edge node [above] {$01$} (q1)
		(q) edge [loop right] node [right] {$00$} ()	
		(q) edge node [right] {$01$} (q1)
		(q1) edge [loop below] node [right] {$00$} ()	
		(q1) edge node [above] {$11$} (f)
		;
\end{fsm}	
\end{example}

\pssn
\textbf{Intersection operation $\cap$.} 
Consider automata $\auta_1,\auta_2$. The automaton $\auta=\auta_1\cap\auta_2$ is obtained via the standard product construction, taking into account footnote~\ref{foot} and that automata labels are of the form $\beta b$ due to possible previous transducer operations
:
(i) the states of \auta are pairs $(q_1,q_2)$ of states from $\auta_1,\auta_2$, respectively; 
(ii) $(s_1,s_2)$ is the start state of \auta, where $s_1,s_2$ are the start states of $\auta_1,\auta_2$, respectively;
(iii) $(f_1,f_2)$ are the final states of \auta, where $f_1,f_2$ are any final states of $\auta_1,\auta_2$, respectively; and 
(iv) the transitions of \auta are of the form  $\big((p_1,p_2),[\beta_1,\beta_2]b,(q_1,q_2)\big)$, when $(p_1,\beta_1b,q_1)$, $(p_2,\beta_2b,q_2)$ are transitions of $\auta_1,\auta_2$, respectively\footnote{If $\beta_1$ is blank then the product label is $[,\beta_2]b$, and similarly if $\beta_2$ is blank. If both $\beta_1,\beta_2$ are blank then the label is $[,]b$---see also Remark~\ref{rem:nested:brackets}.}. 
The automaton \auta accepts the language $\lang(\auta_1)\cap\lang(\auta_2)$---recall here, the component $b$ of  labels $\beta b$ is what determines the language of the automaton, as the part $\beta$ of the label simply records previous transducer inputs that resulted into $b$.
\pnsn
\emph{\underline{Inversion rule}:}  given any accepting path $\bm p$ of \auta having some label $[\beta_1,\beta_2]w$, we can  compute  two accepting paths $\bm p_1,\bm p_2$ of $\auta_1,\auta_2$ with labels $\beta_1 w$ and $\beta_2 w$, respectively. Thus, the word $w$ that is accepted in the path $\bm p$ is also the word accepted in the two paths $\bm p_1,\bm p_2$.

\begin{remark}\label{rem:nested:brackets}
	The use of the brackets `[' and `]' in the  labels for `$\cap$' could result into nested such brackets, which is necessary when the path inversion of the  $\cap$-operation is performed. For example, suppose we need to perform the  operations $\trt_1,\trt_2,\cap,\cap$  in the order implied by the expression 
	$
	\big(\trt_1(\auta)\cap\autb\big)\cap\trt_2(\autc),
	$
	where $\auta,\autb,\autc$ are automata and $\trt_1,\trt_2$ are transducers.
	The operation $\trt_1(\auta)$ produces labels of the form $ab$, where $a/b$ and $a$ are labels in the paths of $\trt_1,\auta$, respectively; 
	and the operation $\trt_2(\autc)$ produces labels of the form $cb$, where $c/b$ and $c$ are labels in the paths of $\trt_2,\autc$, respectively. 
	Then, the operation $\trt_1(\auta)\cap\autb$ produces labels of the form $[a,]b$ by matching the labels $ab$ and $b$ of $\trt_1(\auta)$ and $\autb$, respectively.
	In the final automaton \autd, say, produced by the operation `$\cap$' between $\trt_1(\auta)\cap\autb$ and $\trt_2(\autc)$, the labels are of the form $[[a,],c]b$. Thus, an accepting path $\bm p$ of \autd has a sequence of $n$, say, transition labels $[[a_i,],c_i]b_i$ such that the word accepted by $\bm p$ is $b_1\cdots b_n$.
	Using the above inversion rules, we can use the complete information in the labels $[[a_i,],c_i]b_i$ to compute words $w_1,w_2,w_3$ accepted by $\auta,\autb,\autc$, respectively, such that $b_1\cdots b_n\in \big(\trt_1(w_1)\cap w_2\big)\cap\trt_2(w_3)$.
	Without the nesting of brackets, we would not be able to invert~$\bm p$.
\end{remark}

\pnsi
The next statement follows from the  definitions of, and the comments about the above four path invertible operations.

\begin{lemma}\label{lem:invert}
	If each of the above four operations  results into an automaton \auta that has an accepting path $\bm p$ then the inversion rule for $\bm p$ produces some accepting paths $\bm p_1,\ldots,\bm p_k$ of the automata that were used as operands in the said operation.
	Moreover, the cost of producing those paths is $O(|\bm p|)$.
\end{lemma}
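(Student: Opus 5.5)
The plan is a case analysis over the four constructions, $\odot$, $*$, $\trt$ and $\cap$. In each case I check two things. First, the inversion rule, applied to an accepting path $\bm p=(e_1,\ldots,e_m)$ of \auta, returns sequences of transitions that are accepting paths of the operand automata. Second, it does so in a single left-to-right scan of $\bm p$ with constant work per transition. Here $|\bm p|$ means the number of transitions of $\bm p$. Each transition label is treated as an atomic object, so copying or projecting a label costs $O(1)$. For the $\cap$ and transducer cases I would state this assumption explicitly, since labels of the form $\beta b$ are stored as pointers or tuples.

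\textbf{Concatenation.} Every state of \auta has the form $(i,q)$. The only transitions that change the first component are the added $\big((i,f),\ew,(i+1,s)\big)$, and they only increase $i$ by one. An accepting path therefore starts at $(1,s)$ and ends at some $(k,f)$. Deleting these $k-1$ connecting transitions splits $\bm p$ into $k$ maximal segments, where segment $i$ stays within copy $i$. Projecting each state $(i,q)\mapsto q$ gives a path of $\auta_i$ from $s$ to a final state. The endpoint is final because segment $i$ is left through a connecting transition out of some $(i,f)$, or, for $i=k$, because $\bm p$ ends at a final state $(k,f)$. Concatenating the labels gives $w_1\cdots w_k=w$.

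\textbf{Star.} A nonempty accepting path ends at some $(2',f)$, and all occurrences of $(2',\cdot)$ states cut it into blocks of the form $(0',0)\to(1',s)\cdots(2',f)$ or $(2',f')\to(1',s)\cdots(2',f)$. I drop the leading $\ew$-transition of each block and map each state $(1',q)\mapsto q$. The final transition $\big((1',q),a,(2',f)\big)$ is mapped back to its originating transition $(q,a,f)$. Each block then becomes an accepting path of $\auta_1$. The empty path corresponds to $n=0$.

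\textbf{Transducer and intersection.} Here I project each product transition $\big((p_1,p_2),\beta_1b,(q_1,q_2)\big)$ onto the component of the operand automaton. By construction each such transition arose from a pair of component transitions, so the projections are consecutive transitions. The start and final conditions follow from the componentwise definitions. In the $\cap$ case the label $[\beta_1,\beta_2]b$ decomposes into $\beta_1b$ and $\beta_2b$, and the outer bracket is parsed in constant time. The step that needs care is the $\ew$-self-loops added in footnote~\ref{foot}: a projected transition may be such a loop. I handle this by deleting these added loops from the projected sequence, which keeps it a path and preserves its label.

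Every case is a single pass doing $O(1)$ work per transition of $\bm p$, so the total cost is $O(|\bm p|)$. I expect the main obstacle to be the constant-time handling of nested labels. I resolve it by representing each label as a node that points to the operand labels, so that unpacking one bracket level costs $O(1)$.
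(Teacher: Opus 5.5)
Your proposal is correct and takes the same route as the paper. The paper justifies the lemma only by saying it follows from the definitions of the four constructions and their inversion rules, and your case-by-case single-pass check is exactly the verification it leaves implicit. Two of your refinements are sound and slightly sharper than the paper's treatment. First, deleting the footnote's added $\lambda$-self-loops makes the recovered paths genuine paths of the original operands, whereas the paper's Example~\ref{ex:path:inv} keeps loops such as $t\stackrel{\lambda}{\longrightarrow}t$. Second, projecting onto the operand-automaton component is the correct reading of the transducer inversion rule: the paper writes it as $(p_1,\beta_1,q_1)$, but there $p_1$ is the transducer state.
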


\begin{example}\label{ex:path:inv}
	Consider the expression $\phi_3=(XX)^*\cap\,\trt_0\big(C\,\trt_0(X)\big)$, with $C=00\alstar$, and the language $L=00^*1$.
	Below we show an accepting path of the automaton \auta resulting after applying the operations in $\phi_3$ on the automata $\auta_1,\auta_2$  in Example~\ref{ex:invertible2}, which accept $L,C$, respectively.
Again, 
we abbreviate a simple pair $(p,q)$ as $pq.$
\begin{align*}
\big(0'0,(s_0,1t)\big) &\stackrel{[,\ews]\ews}{\longrightarrow}
\big((1',1s),(s_0,1t)\big) \stackrel{[,0]0}{\longrightarrow}
\big((1',1q),(s_0,1r)\big) \stackrel{[,0]1}{\longrightarrow}
\\
\big((1',1f),(f_0,1g)\big) &\stackrel{[,\ews]\ews}{\longrightarrow}
\big((1',2s),(f_0,1g)\big) \stackrel{[,0]0}{\longrightarrow} 
\big((1',2q),(f_0,1g)\big) \stackrel{[,1]1}{\longrightarrow}
\\
\big((2',2f),(f_0,1g)\big) &\stackrel{[,\ews]\ews}{\longrightarrow}
\big((1',1s),(f_0,(2,s_0s))\big)\stackrel{[,00]0}{\longrightarrow}
\\
\big((1',1q),(f_0,(2,s_0q))\big) &\stackrel{[,01]1}{\longrightarrow}
\big((1',1f),(f_0,(2,f_0q))\big)\stackrel{[,\ews]\ews}{\longrightarrow}
\big((1',2s),(f_0,(2,f_0q))\big) \stackrel{[,00]0}{\longrightarrow}
\\
\big((1',2q),(f_0,(2,f_0q))\big) &\stackrel{[,11]1}{\longrightarrow}
\big((2',2f),(f_0,(2,f_0f))\big)
\end{align*}
The above path results from the product of the automata $(\auta_1\odot\auta_1)^*$ and $\trt_0\big(\auta_2\odot\trt_0(\auta_1)\big)$ and can be decomposed into two accepting paths of those automata.
\pssi
Accepting path of $(\auta_1\odot\auta_1)^*$:
\begin{align*}
0'0     &\stackrel{\ew}{\longrightarrow}
(1',1s) \stackrel{0}{\longrightarrow}
(1',1q) \stackrel{1}{\longrightarrow}
(1',1f) \stackrel{\ew}{\longrightarrow}
(1',2s) \stackrel{0}{\longrightarrow} 
(1',2q) \stackrel{1}{\longrightarrow}
(2',2f) \stackrel{\ew}{\longrightarrow}
\\
(1',1s) &\stackrel{0}{\longrightarrow} 
(1',1q)  \stackrel{1}{\longrightarrow}
(1',1f) \stackrel{\ew}{\longrightarrow}
(1',2s) \stackrel{0}{\longrightarrow}
(1',2q) \stackrel{1}{\longrightarrow}
(2',2f) 
\end{align*}
\qquad
\qquad
\parbox{0.8\textwidth}{\small
The above path can be decomposed into two accepting paths of the automaton $(\auta_1\odot\auta_1)$:
\pssi 
$
1s\stackrel{0}{\longrightarrow}
1q\stackrel{1}{\longrightarrow}
1f\stackrel{\ew}{\longrightarrow}
2s\stackrel{0}{\longrightarrow}
2q\stackrel{1}{\longrightarrow}
2f$\quad and \quad
$
1s\stackrel{0}{\longrightarrow}
1q\stackrel{1}{\longrightarrow}
1f\stackrel{\ew}{\longrightarrow}
2s\stackrel{0}{\longrightarrow}
2q\stackrel{1}{\longrightarrow}
2f
$
\pssn
Each of those can be further decomposed into two paths of $\auta_1$:
\pssi\qquad\qquad
$
s\stackrel{0}{\longrightarrow}
q\stackrel{1}{\longrightarrow}
f,
$ \quad 
$
s\stackrel{0}{\longrightarrow}
q\stackrel{1}{\longrightarrow}
f$
\qquad and \qquad  
$s\stackrel{0}{\longrightarrow}
q\stackrel{1}{\longrightarrow}
f,$ \quad 
$
s\stackrel{0}{\longrightarrow}
q\stackrel{1}{\longrightarrow}
f$,
\pssn resulting into the words 01, 01, 01, 01 whose concatenation 01010101 is in $(LL)^*$.
}
\pssi
Accepting path of $\trt_0\big(\auta_2\odot\trt_0(\auta_1)\big)$:
\begin{align*}
(s_0,1t) &\stackrel{\ews\ews}{\longrightarrow}
(s_0,1t) \stackrel{00}{\longrightarrow}
(s_0,1r) \stackrel{01}{\longrightarrow}
(f_0,1g) \stackrel{\ews\ews}{\longrightarrow}
(f_0,1g) \stackrel{00}{\longrightarrow} 
(f_0,1g) \stackrel{11}{\longrightarrow}
(f_0,1g) \stackrel{\ews\ews}{\longrightarrow}
\\
\big(f_0,(2,s_0s)\big)&\stackrel{000}{\longrightarrow}
\big(f_0,(2,s_0q)\big)\stackrel{011}{\longrightarrow}
\big(f_0,(2,f_0q)\big)\stackrel{\ews\ews}{\longrightarrow}
\big(f_0,(2,f_0q)\big) \stackrel{000}{\longrightarrow}
\\
\big(f_0,(2,f_0q)\big) &\stackrel{111}{\longrightarrow}
\big(f_0,(2,f_0f)\big)
\end{align*}
\qquad
\qquad
\parbox{0.8\textwidth}{\small
The above path can be decomposed into an accepting path of the automaton $\big(\auta_2\odot\trt_0(\auta_1)\big)$:
\begin{align*}
1t &\stackrel{\ew}{\longrightarrow}
1t \stackrel{0}{\longrightarrow}
1r \stackrel{0}{\longrightarrow}
1g \stackrel{\ew}{\longrightarrow}
1g \stackrel{0}{\longrightarrow} 
1g \stackrel{1}{\longrightarrow}
1g \stackrel{\ew}{\longrightarrow}
\\
(2,s_0s)&\stackrel{00}{\longrightarrow}
(2,s_0q) \stackrel{01}{\longrightarrow}
(2,f_0q) \stackrel{\ew}{\longrightarrow}
(2,f_0q) \stackrel{00}{\longrightarrow}
(2,f_0q) \stackrel{11}{\longrightarrow}
(2,f_0f)
\end{align*}
This path can be further decomposed into two paths of $\auta_2$ and $\trt_0(\auta_1)$:
\pssn\;
$
t \stackrel{\ew}{\longrightarrow}
t \stackrel{0}{\longrightarrow}
r \stackrel{0}{\longrightarrow}
g \stackrel{\ew}{\longrightarrow}
g \stackrel{0}{\longrightarrow} 
g \stackrel{1}{\longrightarrow}
g 
$ 
$\>\>$ and $\>\>$
$
s_0s \stackrel{00}{\longrightarrow}
s_0q \stackrel{01}{\longrightarrow}
f_0q \stackrel{\ew}{\longrightarrow}
f_0q \stackrel{00}{\longrightarrow}
f_0q \stackrel{11}{\longrightarrow}
f_0f
$,
\pssn
and the latter path is decomposed into a path of $\auta_1$: 
\pssi\qquad
$
s \stackrel{0}{\longrightarrow}
q \stackrel{0}{\longrightarrow}
q \stackrel{\ew}{\longrightarrow}
q \stackrel{0}{\longrightarrow}
q \stackrel{1}{\longrightarrow}
f
$,
\pssn 
resulting into the words $w_5=0001\in C$ and $w_6=0001\in L$ such that $\trt_0\big(w_5\,\trt_0(w_6)\big)\not=\emptyset$.
}
\end{example}

\section{Computing Witnesses of Non-satisfaction}\label{sec:comp:witness}
Here we assume that the constant languages occurring in an independence expression $\phi$ are automata and that \emph{the variable $X$ ranges over automata}. This is well defined, as all the operations involved in $\phi$ can also be `overloaded' to work on automata---see Section~\ref{sec:invertible:ops}. 

\pssn
\textbf{Computing a flat witness.}
If $\auta_1$ is any automaton then $\valf(\auta_1)$ is the automaton resulting if we evaluate $\phi$ on $\auta_1$ using the constructions in Section~\ref{sec:invertible:ops}.
Then, we can either compute\footnote{E.g., via a shortest path algorithm from the initial to a final state of the automaton.} and return a 
word $w\in\lang\big(\valf(\auta_1)\big)$, or return that no such word exists.
This gives a flat witness for $\phi,\lang(\auta_1)$ by consulting the automaton $\valf(\auta_1)$.

\begin{example}\label{ex:flat:wit}
Consider the independence expression $XX\cap\alplus X\alplus$ for describing comma-free codes and the expression tree on the left of the below figure, where $\auta_1$ accepts some language $L$ and $\auta_2$ accepts \alplus. The automaton $\auta_1\odot\auta_1$ accepts $LL$ and the automaton $\auta_2\odot\auta_1\odot\auta_2$ accepts $\alplus L\alplus$.   
To get the final automaton \auta, we use the operation $\cap$ between the automata $\auta_1\odot\auta_1$ and $\auta_2\odot\auta_1\odot\auta_2$.

\pmsn
\qquad
\begin{tikzpicture}[>=stealth, shorten >=2pt, auto, node distance=1cm, initial text={}]

\node[inner sep=1pt, minimum size=5pt] (S0){$\cap$};
\node[inner sep=1pt, minimum size=5pt] [below left =.5cm and .7cm of S0](S1){$\odot$};
\node[inner sep=1pt, minimum size=5pt] [below  right =.5cm and .8cm of S0](S2){$\odot$};

\node[inner sep=1pt, minimum size=5pt] [below  left of =S1](S5){$\auta_1$};
\node[inner sep=1pt, minimum size=5pt] [below  right of =S1](S3){$\auta_1$};
\node[inner sep=1pt, minimum size=5pt] [below  left of =S2](S7){$\auta_2$};
\node[inner sep=1pt, minimum size=5pt] [below  =.4cm of S2](S6){$\auta_1$};
\node[inner sep=1pt, minimum size=5pt] [below right of =S2] (S4){$\auta_2$};

\node[right of =S2, node distance=1.3cm,brown](){${\Longrightarrow}$};

\path[->]
(S0) edge [swap] node {} (S1)
(S0) edge  node {} (S2)
(S1)   edge [swap] node {} (S5)
(S1)   edge node {} (S3)
(S2)   edge [swap] node {} (S7)
(S2)   edge node {} (S4)
(S2)   edge node {} (S6)
;
\end{tikzpicture}
\begin{tikzpicture}[>=stealth, shorten >=2pt, auto, node distance=1cm, initial text={}]

\node[inner sep=1pt, minimum size=5pt] (S0){$\cap$};
\node[inner sep=1pt, minimum size=5pt] [below left =.5cm and .2cm of S0](S1){$\auta_1\odot\auta_1$};
\node[inner sep=1pt, minimum size=5pt] [below  right =.5cm and .2cm of S0](S2){$\auta_2\odot\auta_1\odot\auta_2$};

\node[right of =S0, node distance=2.3cm,brown](rarr){${\Longrightarrow}$};
\node[right of =rarr, node distance=1.3cm](){$\auta$};

\path[->]
(S0) edge [swap] node {} (S1)
(S0) edge  node {} (S2)
;
\end{tikzpicture}
\pssn
\end{example}


\subsection{Augmented automata and the Tree Witness Algorithm}\label{sec:augmented}
Referring to Example~\ref{ex:flat:wit}, we would like to be able to start from an accepting path $\bm p$ of \auta and decompose it into two paths $\bm p_3,\bm p_4$ that are also accepting paths of $\auta_1\odot\auta_1$ and $\auta_2\odot\auta_1\odot\auta_2$, respectively; then decompose $\bm p_3$ into two paths that are both accepting paths of $\auta_1$; and then decompose $\bm p_4$ into three paths that are accepting paths of $\auta_2,\auta_1,\auta_2$, respectively.
Then, these paths can be used to retrieve witness words.
This process is described below and uses the concept of an augmented automaton.
Informally, an augmented automaton is a pair $(\auta,\tree)$ such that \auta is an automaton that results after repeatedly applying  operations on some initial list of automata, and $\tree$ is the expression tree that corresponds to the application of these operations. By deconstructing the tree $\tree$ and the states  of the  automaton \auta, it is possible  to infer the operations that led to its construction. 

\begin{definition}\label{def:augm:aut}
	An \emdef{augmented automaton}  is one of the following
	\begin{itemize}
    \setlength{\itemsep}{2pt}%
    \setlength{\parskip}{0pt}%
    \vspace{-0.7\topsep}
		\item A pair $(\auta,\etree)$, where \auta is an automaton and \etree is  the empty tree.
		\item If $(\auta_1,\tree_1),(\auta_2,\tree_2)$ are augmented automata then $\big(\auta_1\cap\auta_2,(\cap,\tree_1,\tree_2))$ is an augmented automaton.
		\item If $k\ge2$ and $(\auta_i,\tree_i)$, for $i=1,\ldots k$, are augmented automata then $\big(\auta_1\odot\cdots\odot\auta_k,(\odot,\tree_1,\ldots,\tree_k)\big)$ is an augmented automaton.
		\item If $(\auta_1,\tree_1)$ is an augmented automaton then $\big(\auta_1^*, (*,\tree_1)\big)$ is an augmented automaton.
		\item If $(\auta_1,\tree_1)$ is an augmented automaton then $\big(\trt(\auta_1), (\trt,\tree_1)\big)$ is an augmented automaton.
	\end{itemize}
\end{definition}

\begin{remark}\label{rem:augmented:aut}
Let $\phi$ be an independence expression and let $\auta_1$ be an automaton. 
If we traverse  the expression tree \treef in DFS mode and evaluate $\phi$ on $\auta_1$ together with the tree \tree according to Definition~\ref{def:augm:aut} then we will get the augmented automaton $\big(\valf(\aut_1),\tree\big)$.
The trees \treef and \tree have identical structures (except that their leafs are different).
\end{remark}

\begin{figure}[ht]
\qquad
\parbox{0.95\textwidth}
{
\hspace*{0.3\algoindent} 
$\mathsf{TreeWitness}\, (\hat\varphi,\auta_1)$
\textsf{
\pssn \hspace*{\algoindent}
Do a DFS traversal of $\hat\varphi$ and compute the augmented automaton $(\auta,\tree)$
\\ \hspace*{\algoindent}
if \auta has no accepting path
\\\hspace*{1.7\algoindent} 
return  \textsf{None}
\\ \hspace*{\algoindent}
\gray{\# below the algorithm assigns a path to each node of \tree}
\\ \hspace*{\algoindent}
\gray{\# and alters \tree when a node `*' is encountered}
\\ \hspace*{\algoindent}
assign an accepting path of \auta to the root of $\tree$
\\ \hspace*{1.0\algoindent} 
Do a BFS traversal of $\tree$ such that when a node $x$ is visited: 
\\\hspace*{1.7\algoindent} 
if $x$ is not the root:
\\ \hspace*{2.7\algoindent}
let $y$ be the parent of $x$
\\ \hspace*{2.7\algoindent}
assign to $x$ the path resulting from the path of $y$
\\ \hspace*{2.7\algoindent}
\gray{\# the above is based on the path inversion rules  in Sec. \ref{sec:invertible:ops}}
\\ \hspace*{1.7\algoindent}
if $x$ is a * node:
\\ \hspace*{2.7\algoindent}
if the path $\bm p$ of $x$ contains no state of the form $(2',f)$:
\\ \hspace*{3.7\algoindent}
replace $x$ and its subtree with the empty word \ew (a leaf)
\\ \hspace*{2.7\algoindent}
else:
\\ \hspace*{3.7\algoindent}
let $n\ge1$ be the number of states of the form $(2',f)$ in $\bm p$;
\\ \hspace*{3.7\algoindent}
replace the subtree of $x$ with $n$ identical copies of the subtree
\\ \hspace*{3.7\algoindent}
and assign to them the $n$ paths resulting from $\bm p$
\\ \hspace*{3.7\algoindent}
\gray{\# again the above is based on the path inversion rules in Sec. \ref{sec:invertible:ops}}
\\ \hspace*{\algoindent} 
Remove all paths from the internal nodes of \tree 
\\ \hspace*{\algoindent} 
\gray{\# below the algorithm replaces the path of each leaf with a word} 
\\ \hspace*{\algoindent} 
For each leaf node $x$ that is not \ew: 
\\ \hspace*{1.7\algoindent} 
replace the path of $x$ with the word (label) of the path
\\ \hspace*{\algoindent} 
Return the tree \tree  
}  
}  
\end{figure}
\begin{theorem}\label{th:tree:witness}
Let $\phi$ be an independence expression and let $\auta_1$ be an automaton.
If $\lang\big(\valf(\auta_1)\big)\not=\emptyset$ then $\mathsf{TreeWitness}\,(\treef,\auta_1)$ returns a tree witness  for $\phi,\lang(\auta_1)$.
\end{theorem}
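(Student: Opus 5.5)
We have to show two things about the returned tree $\tree$: it arises from $\treef$ by the three steps of Definition~\ref{def:witness:tree}, and $\emptyset\not=\valt\subseteq\valf(\lang(\auta_1))$. The structural part is read off the algorithm. By Remark~\ref{rem:augmented:aut}, the DFS phase produces an augmented automaton $(\auta,\tree)$ with $\auta=\valf(\auta_1)$ and $\tree$ shaped exactly like $\treef$. The BFS phase visits the $*$ nodes in BFS order. Each such node is either replaced by the leaf $\ew$ or turned into a node with $n\ge1$ identical copies of its subtree; in the latter case it is read as `$\odot$', matching step~1 of Definition~\ref{def:witness:tree}. At the end, each $X$-leaf carries a word accepted by $\auta_1$, i.e.\ an $L$-word, and each constant leaf $C$ carries a word of $C$, matching steps~2 and~3.

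For the semantic part I will prove, by induction on the height of a node, an invariant about the BFS assignment. Suppose node $x$ of the final tree is assigned an accepting path $\bm p_x$ of the automaton $\auta_x$ built at that node. Then the word $w_x$ accepted by $\bm p_x$ (the $b$-part of its label) lies in $\valt_x$, the value of the subtree of $\tree$ rooted at $x$, with leaves now holding words. The argument runs through the operations in turn.
\begin{itemize}
\item \emph{Leaves.} The claim is immediate.
\item \emph{Concatenation.} The inversion rule gives accepting paths of the operands with $w_x=w_1\cdots w_k$. Induction gives each $w_i\in\valt_i$, so $w_x\in\valt_x$.
\item \emph{Star.} The $n$ paths recovered at the $(2',f)$ occurrences concatenate to $w_x$, and the same argument applies. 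In the case $n=0$, $w_x=\ew$ and the leaf $\ew$ has value $\{\ew\}$.
\item \emph{Intersection.} The two recovered paths accept the same $w_x$, so $w_x\in\valt_1\cap\valt_2$.
\item \emph{Transducer.} The recovered path of the child has label $\beta$, and by construction of the product transitions $(\beta,w_x)\in R(\trt)$. Hence $w_x\in\trt(\valt_1)$.
\end{itemize}
Lemma~\ref{lem:invert} guarantees that every recovered path is indeed accepting in the operand automaton, so the induction is well founded.

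Applying the invariant at the root gives $w\in\valt$, so $\valt\not=\emptyset$. For the inclusion $\valt\subseteq\valf(\lang(\auta_1))$ I would argue directly by induction, or by Lemma~\ref{lem:tree:subset}-style reasoning. Each leaf word lies in the corresponding basic language, and the operations are monotone (Lemma~\ref{lem:subset}). The replacement $\treef_1\mapsto\treef_1\odot\cdots\odot\treef_1$, or the leaf $\ew$, yields a subset of $(\cdot)^*$. So, by induction on the subtrees, $\valt_x\subseteq\valf_x(L)$.

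The main obstacle I expect is the transducer and intersection bookkeeping. I must check that the nested labels $[\beta_1,\beta_2]b$ and $\beta bc$ (including the $\ews$ markers from $\ew$-matches) are peeled off correctly as we descend, so that at each node the path really is an accepting path of that node's automaton. The label read as a word must also be what the invariant uses. I would handle this by stating precisely, for each node, which component of a label is its ``output'' and what is passed to the child, treating $\ews$ as $\ew$, and then verifying the single-step inversion rules against that convention.
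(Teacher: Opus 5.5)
Your proposal is correct and follows essentially the same route as the paper: you check that the BFS phase reshapes $\treef$ exactly as Definition~\ref{def:witness:tree} prescribes, and you use the inversion rules (Lemma~\ref{lem:invert}) to push accepting paths from the root down to the leaves. Your version is more explicit than the paper's short top-down justification, since you get non-emptiness from the bottom-up invariant $w_x\in\valt_x$ and prove the inclusion $\valt\subseteq\valf(\lang(\auta_1))$ separately by monotonicity. One small correction for the transducer step: the pair in $R(\trt)$ is formed by the word accepted by the child's path (the last components of the labels in $\beta$), not by $\beta$ itself.
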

\begin{proof} 
	First note that after the initial DFS traversal, the produced tree \tree  has exactly the same structure as \treef (the only difference being that each leaf of \tree is \etree, whereas each leaf of \treef is $X$ or a language constant), and that the produced automaton \auta accepts $\lang\big(\valf(\auta_1)\big)$.
	In the BFS traversal, the structure of the tree \tree is altered when the visited node $x$ is `*', which agrees with what \tree should be according to Definition~\ref{def:witness:tree}.
	Moreover, after the BFS traversal, every leaf node $x$ has a path assigned to it, unless $x$ is \ew as a result of substituting \ew for a subtree with root `*'.
	The label of that path is used to set the leaf to a word, as required.
	If $\valf\big(\lang(\auta_1)\big)\not=\emptyset$, the resulting tree \tree satisfies $\emptyset\not=\valt\subseteq\valf\big(\lang(\auta_1)\big)$ because: at the root level, the algorithm computes an accepting path of the automaton \auta which accepts $\valf\big(\lang(\auta_1)\big)$, and at each lower level tree node $x$, the parent path is used to produce \emph{accepting paths} of the automata corresponding to the operands of the operation $x$, according to Lemma~\ref{lem:invert} about the path inversion rules in Section~\ref{sec:invertible:ops}.
\end{proof}

\subsection{Complexity}\label{sec:cxty}
The algorithm might use exponential space, as the expressions $\phi$ could involve an unbounded number of $\cap$-operations and each operation $\auta\cap\autb$ takes time $O(|\auta|\cdot|\autb|)$.
In fact the decision version of the problem, which is the uniform satisfaction problem of whether $\valf(\auta_1)$ accepts at least one word is PSPACE-complete---see below Theorem~\ref{th:cxty}.
What if $\phi$ is considered fixed? 
As stated in Section~\ref{sec:sat}, there have been several results on the  satisfaction problem of fixed properties \indepP. 
If there is a (fixed) expression $\phi$ such that $\indep{\phi}=\indepP$ then the automaton $\valf(\auta_1)$ can be computed in polynomial time:  

\begin{theorem}\label{th:cxty}
The following statements hold true.
\begin{enumerate}
    \setlength{\itemsep}{1pt}%
    \setlength{\parskip}{0pt}%
    \vspace{-0.7\topsep}
	\item For fixed expression tree \treef, algorithm $\mathsf{TreeWitness}(\auta_1)$ works in time $O(|\auta_1|^{1+|\phi|_\cap}),$ where $|\phi|_\cap$ is the (fixed) number of $\cap$-operations occurring in $\phi$.
	\item The uniform satisfaction problem of whether $\lang\big(\valf(\auta_1)\big)=\emptyset$, for given $\phi,\auta_1$, is PSPACE-complete.
\end{enumerate}
\end{theorem}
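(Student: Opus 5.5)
For statement~1, I will bound the size of every intermediate automaton built during the DFS traversal of \treef by induction on the structure of the tree. The claim I aim for is that the automaton computed for a subtree $\treef'$ has size $O(|\auta_1|^{1+|\phi'|_\cap})$, with the constant depending only on \treef (which is fixed). The constants in that bound are absorbed by the constant automata and transducers, whose sizes are also fixed.

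The cases are as follows. A leaf $X$ gives $|\auta_1|$, and a constant leaf gives $O(1)$. For concatenation of $k$ operands, sizes add and only $O(\sum|\auta_i|)$ linking transitions are added; the exponent is the maximum over the operands. Since the counts $|\phi_i|_\cap$ add across children, the bound $1+\sum_i|\phi_i|_\cap$ still dominates. For star, the size grows only linearly. For a transducer $\trt$, the product has size $O(|\trt|\cdot|\auta'|)=O(|\auta'|)$, because $\trt$ is fixed; adding the $\ew$-self-loops keeps this linear. For intersection, the size is $O(|\auta'|\cdot|\autb'|)$, so the exponents add: $(1+c_1)+(1+c_2)=1+(c_1+c_2+1)$, which matches the count $c_1+c_2+1$ of $\cap$-nodes.

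Next I bound the running time. The DFS costs time linear in the sizes of the constructed automata. Finding an accepting path by BFS on the final automaton \auta is linear in $|\auta|$, and the shortest path has length at most the number of states. By Lemma~\ref{lem:invert}, each inversion step costs time linear in the length of the parent path. The subtle point is star nodes: they replicate subtrees, but the paths assigned to the copies partition the parent path. Consequently, at each depth of \treef the total length of the assigned paths is at most $|\bm p|$, up to $\ew$-transitions that are linear in $|\bm p|$. Since the depth is fixed, the BFS phase costs $O(|\bm p|)=O(|\auta|)$. The total is therefore $O(|\auta_1|^{1+|\phi|_\cap})$.

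For statement~2, membership in PSPACE: I will avoid building \valf$(\auta_1)$ explicitly and instead guess an accepting path on the fly, one symbol at a time. A state of the composite automaton is a tuple of states whose nesting follows \treef, so it has size polynomial in $|\phi|+|\auta_1|$. The star and concatenation copies just add tags, and the transitions are checkable in polynomial space. Reachability in this implicit graph is then in NPSPACE, which equals PSPACE by Savitch. There is one caveat to handle here: products must synchronise the labels $b$, and the $\ew$-moves must be handled via the self-loops. For hardness, I will reduce from DFA intersection non-emptiness (Kozen), which is PSPACE-complete. Given DFAs $\autb_1,\ldots,\autb_m$, take $\phi=C_1\cap\cdots\cap C_m$ with constant languages $C_i=\lang(\autb_i)$; then $\valf(\auta_1)\neq\emptyset$ iff the intersection is non-empty. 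If one insists that $X$ occur, I use $\phi=(X\cap\ew)\odot C_1\cap\cdots$ with $\auta_1$ accepting $\{\ew\}$, or simply $X\cap C_1\cap\cdots\cap C_m$ with $\auta_1=\autb_1$. The satisfaction problem asks whether the language is empty, and PSPACE is closed under complement, so completeness follows. I expect the main obstacle to be the rigorous on-the-fly upper bound, which requires the bookkeeping of nested product states through star and concatenation without the exponential blowup.
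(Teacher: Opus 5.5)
Your proposal is correct and follows essentially the paper's route. For part~1 it uses per-operation size and time bounds (linear for $\odot$, $*$ and fixed transducers, quadratic only at $\cap$-nodes) plus linear-time path inversion via Lemma~\ref{lem:invert}. For part~2 it guesses an accepting path nondeterministically on the fly through the tree-structured product states and reduces from DFA intersection emptiness.

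One small correction: at a $\cap$-node both children inherit a path as long as the parent's, so the total path length at a given depth is bounded by $2^{|\phi|_\cap}|\bm p|$ rather than $|\bm p|$; this still gives $O(|\bm p|)$ since $\phi$ is fixed.
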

\begin{proof}
The first statement is a consequence of the  following points: 
(i) the time complexity is the sum of time complexities to perform the operations in the internal nodes of \treef, plus the time complexities to produce the path of each node from the parent node path;
(ii) the operations `*', `$\odot$' and \trt take linear time (as each transducer operation \trt is fixed); 
(iii) the operation `$\cap$' takes at most quadratic time;
(iv) by Lemma~\ref{lem:invert}, computing each node path can be done in linear time. 

\pnsi
For the second statement, first note that the uniform satisfaction problem is PSPACE-hard because the empty DFA intersection problem, which is PSPACE-hard, \cite{FerKre:2017}, can be reduced to it as follows: Any list $\auta_1,\ldots,\auta_k$ of DFAs, can be mapped in polynomial time to the instance $(\treef,\auta_1)$, where $\phi=X\cap\auta_2\cap\cdots\cap\auta_k$.
Now we need to show that the problem is in PSPACE.
It is sufficient to show a nondeterministic algorithm that decides whether $\lang\big(\valf(\auta_1)\big)\not=\emptyset$ using polynomial space.
Let $\ell=1+|\phi|_{\cap}+|\phi|_{\rm tr}$, where $|\phi|_{\rm tr}$ is the number of transducer operations in $\phi$.
Then, the number of states in the automaton $\valf(\auta_1)$ is $\le m^{\ell}$, where $m$ is the largest of the sizes of the automata and transducers occurring in $\treef$. 
This implies that $\lang\big(\valf(\auta_1)\big)\not=\emptyset$ iff $\valf(\auta_1)$ accepts a word of length $< m^{\ell}$. Next we describe the nondeterministic algorithm~$A$.
\pnsi
The main idea is that  algorithm $A$ will first compute the start state of $\valf(\auta_1)$.
Then, $A$ will guess $< m^{\ell}$  symbols. 
For each guessed symbol $\sigma_i$, $A$ will guess the next state of $\valf(\auta_1)$ using only $\sigma_i$ and the current state (thus only the current and next state need to be stored in each iteration). 
To construct the start state, $A$  does a DFS traversal of the tree \treef. When the traversal goes up, from leafs to the root, it adds a component of the start state at each visited tree node,  according to the path invertible operations described in Section~\ref{sec:invertible:ops}.
When the current state $\theta$ has been computed and the symbol $\sigma_i$ has been guessed, the next state $\theta'$ is computed via a DFS traversal of \treef as follows. When a node (other than the root) is visited, the state of the parent node is decomposed, depending on the node operation, and the appropriate decomposed state component is stored at that node. 
When a leaf node is reached, the decomposed state and $\sigma_i$ are used to guess a transition to the next state of the automaton at the leaf node. As in the case of the start state construction, the next state at an internal node is constructed from the constructed next states of the children, according to the path invertible operations described in Section~\ref{sec:invertible:ops}. 
When all guessed letters $\sigma_i$ have been processed, the algorithm returns true if the last constructed state is final in $\valf(\auta_1)$. 
\pnsi
We continue the proof with some further details---see also below Example~\ref{ex:pspace}.  In the path that is being guessed, \emph{each symbol $\sigma_i$ is allowed to be \ew, which would match labels \ew in transitions}.
In the case where the DFS goes down along a subtree whose root is a transducer operation \trt, where \trt is the highest occurrence of such an operation, the algorithm, going further down to a leaf automaton, guesses a transition with some label $a$ and then going up to \trt, a transition with label $a/\sigma$ in \trt is guessed. 
This was when there is no other transducer operation between the leaf and \trt. In general, if there are some $k\ge 1$ intermediate transducer operations, then, after $a$ is guessed at the leaf level, going up, the algorithm guesses: a transition with label $a/a_1$ at the next transducer, another transition with label $a_1/a_2$ at the next transition label, until a guess of a transition with label $a_k/\sigma$ at the level of \trt. All these transition guesses are used to construct the next state $\theta'$.
\end{proof}

%
\begin{example}\label{ex:pspace}
Consider again the expression tree $\treef_3$ in Example~\ref{ex:witness:tree}.	
In the picture below, the left tree shows the build-up  of the \emph{start state} $\big(00',(0,1t)\big)$. Note that when DFS is going down and encounters a *-node, the state of that node is $(0,0')$ and there is no need to visit the subtree below.\footnote{As in Example~\ref{ex:path:inv}, we sometimes abbreviate a simple pair $(x,y)$ as $xy$.} 
In the right part of the left tree, when DFS reaches the leaf $\auta_2$, it finds the start state $t$ and then goes up building the start states of the higher nodes.
Now suppose that the current guessed symbol is $\sigma_i$ and the current state is 
$\theta=\big((1',1x),(i,(2,jy))\big)$, which means that, when $\theta$ was computed, $x$ was the state of the leftmost $\auta_1$, $(1,x)$ was the state of the left $\odot$-node, $(1',1x)$ was the state of the *-node, $y$ was the state of the rightmost $\auta_1$, $(j,y)$ was the state of the low \trt-node, $(2,jy)$ was the state of the right $\odot$-node, and  $\big(i,(2,jy)\big)$ was the state of the high \trt-node.
The right tree shows how the next state $\theta'=\big((b',bx'),(i',(2,j'y'))\big)$ could be computed from $\theta$ and $\sigma_i$. The DFS traversal goes down to the leftmost leaf $\auta_1$, guesses a transition $(x,\sigma_i,x')$, or in case $x$ is a final state in $\auta_1$ and $\sigma_i=\ew$, it could  go up to $\odot$ and then down to the second $\auta_1$ and find its start state $x'=s$. 
Then, the current state of the first $\odot$-node is $(b,x')$, where $b$ is 1 or 2. Then, the state of the *-node is $(b',bx')$, where $b'$ is $1'$ or $2'$, according to the *-construction in Section~\ref{sec:invertible:ops}.
When the DFS traversal goes down the right part of $\treef_3$, it reaches the rightmost leaf $\auta_1$ according to the right part $(i,(2,jy))$ of the current state $\theta$. There it guesses a transition $(y,a,y')$ and, going up, it guesses a transition $(j,a/a_1,j')$ of the low \trt and sets the state of the low \trt-node to $(j',y')$; then, it sets the state of the right $\odot$-node to $(2,j'y')$; then, it guesses a transition $(i,a_1/\sigma_i,i')$ of the high $\trt$ and sets the state of the high \trt node to $(i',(2,j'y'))$. Finally, the state $(b',bx')$ of the *-node and the state $(i',(2,j'y'))$ of the high $\trt$ node are used to make the next state $\theta'$, as required.
\end{example}

\begin{figure}
\qquad
\begin{tikzpicture}[>=stealth, shorten >=2pt, auto, node distance=1cm, initial text={}]

\node[inner sep=1pt, minimum size=5pt] (S0){$\cap$};
\node[inner sep=1pt, minimum size=5pt] [right =.1cm of S0,blue](){\small $\big(00',(s_0,1t)\big)$};
\node[inner sep=1pt, minimum size=5pt] [below left =.5cm and .7cm of S0](S1){$*$};
\node[inner sep=1pt, minimum size=5pt] [left =0.1cm of S1,blue](){\small $00'$};
\node[inner sep=1pt, minimum size=5pt] [below  right =.5cm and .7cm of S0](S2){$\trt$};
\node[inner sep=1pt, minimum size=5pt] [right =.1cm of S2,blue](){\small $(s_0,1t)$};
\node[inner sep=1pt, minimum size=5pt] [left =0.5cm of S0](){$\treef_3=$};

\node[inner sep=1pt, minimum size=5pt] [below = 0.7cm of S1](doto){$\odot$};
\node[inner sep=1pt, minimum size=5pt] [below right = of S2] (dotth){$\odot$};
\node[inner sep=1pt, minimum size=5pt] [left = 0.1cm of dotth,blue] (){\small $1t$};

\node[inner sep=1pt, minimum size=5pt] [below  left =0.7cm and 0.7cm of doto](w1){$\auta_1$};
\node[inner sep=1pt, minimum size=5pt] [below  = 0.7cm of doto](w2){$\auta_1$};

\node[inner sep=1pt, minimum size=5pt] [below  = 0.65cm of  dotth] (trt){$\trt$};
\node[inner sep=1pt, minimum size=5pt] [below left = 0.65cm and 0.5cm of dotth] (w5){$\auta_2$};
\node[inner sep=1pt, minimum size=5pt] [left = 0.06cm of w5,blue] (){\small $t$};
\node[inner sep=1pt, minimum size=5pt] [below = 0.7cm of trt] (w6){$\auta_1$};

\path[->]
(S0) edge [swap] node {} (S1)
(S0) edge  node {} (S2)
(S1)   edge [swap] node {} (doto)
(S2)   edge node {} (dotth)

(doto) edge node {} (w1)
(doto) edge node {} (w2)

(dotth)   edge node {} (trt)
(dotth)   edge node {} (w5)
(trt)   edge node {} (w6)
;
\end{tikzpicture}
\qquad\qquad
\begin{tikzpicture}[>=stealth, shorten >=2pt, auto, node distance=1cm, initial text={}]

\node[inner sep=1pt, minimum size=5pt] (S0){$\cap$};
\node[inner sep=1pt, minimum size=5pt] [right =.1cm of S0,blue](){\small $\big((b',bx'),(i',(2,j'y'))\big)$};
\node[inner sep=1pt, minimum size=5pt] [below left =.5cm and .7cm of S0](S1){$*$};
\node[inner sep=1pt, minimum size=5pt] [left =0.1cm of S1,blue](){\small $(b',bx')$};
\node[inner sep=1pt, minimum size=5pt] [below  right =.5cm and .7cm of S0](S2){$\trt$};
\node[inner sep=1pt, minimum size=5pt] [right =.1cm of S2,blue](){\small $(i',(2,j'y'))$};
\node[inner sep=1pt, minimum size=5pt] [left =0.5cm of S0](){$\treef_3=$};

\node[inner sep=1pt, minimum size=5pt] [below = 0.7cm of S1](doto){$\odot$};
\node[inner sep=1pt, minimum size=5pt] [left =0.1cm of doto,blue](){\small $bx'$};
\node[inner sep=1pt, minimum size=5pt] [below right = of S2] (dotth){$\odot$};
\node[inner sep=1pt, minimum size=5pt] [right =0.1cm of dotth,blue](){\small $(2,j'y')$};

\node[inner sep=1pt, minimum size=5pt] [below  left =0.7cm and 0.7cm of doto](w1){$\auta_1$};
\node[inner sep=1pt, minimum size=5pt,gray] [below =0.1cm of w1](){$x'$ is in one of the two $\aut_1$'s};
\node[inner sep=1pt, minimum size=5pt] [below  = 0.7cm of doto](w2){$\auta_1$};

\node[inner sep=1pt, minimum size=5pt] [below  = 0.65cm of  dotth] (trt){$\trt$};
\node[inner sep=1pt, minimum size=5pt] [right =0.1cm of trt,blue](){\small $j'y'$};
\node[inner sep=1pt, minimum size=5pt] [below left = 0.65cm and 0.5cm of dotth] (w5){$\auta_2$};
\node[inner sep=1pt, minimum size=5pt] [below = 0.7cm of trt] (w6){$\auta_1$};
\node[inner sep=1pt, minimum size=5pt] [right =0.1cm of w6,blue](){\small $y'$};

\path[->]
(S0) edge [swap] node {} (S1)
(S0) edge  node {} (S2)
(S1)   edge [swap] node {} (doto)
(S2)   edge node {} (dotth)

(doto) edge node {} (w1)
(doto) edge node {} (w2)

(dotth)   edge node {} (trt)
(dotth)   edge node {} (w5)
(trt)   edge node {} (w6)
;
\end{tikzpicture}
\end{figure}

\begin{remark}
While the uniform decision problems for the methods of regular trajectories and input-altering transducers are efficiently decidable, it should not be surprising that the uniform satisfaction problem for independence expressions is hard.
In fact, the uniform satisfaction problem for the method of regular trajectory hypersets is undecidable \cite{DomSal:2006}. 
The latter two problems are expressive enough to reduce known hard problems.	
\end{remark}

\section{Concluding Remarks}\label{sec:last}
We have considered independent languages $L$ satisfying the language equation $\phi(X)=\emptyset$. We showed that these languages are independent in the sense of J\"urgensen independence and presented an algorithm that computes a tree witness of $\phi(L)\not=\emptyset$, for given $\phi$ and regular $L$.
The method of independence expressions is probably not powerful enough to express the property of UD-codes, something that we leave as future problem to investigate.
Another problem for future research is the investigation of witnesses of non-satisfaction for independent properties involving combinations of the UD-code and other properties.

Following a suggestion of one referee of this paper, we add as future research the question of constructing a maximal language satisfying a given expression $\phi$, potentially inside a fixed class of languages; for instance: ``is there an infinite prefix code satisfying $\phi$ and, if yes, can we compute a maximal such language ?". We note that, for the case of \trt-independence defined by the expression $\trt(X)\cap X$ (see Section~\ref{sec:sat}), the work of \cite{KonMas:2017} gives a partial answer to this question. 
Another referee suggestion for future research is to investigate the notion of minimal tree witness (using the fewest number of $L$-words and/or shortest such words).

\bibliographystyle{eptcs}
\bibliography{refs.bib}

\end{document}